\documentclass[10pt,journal,compsoc]{IEEEtran}

\ifCLASSOPTIONcompsoc
  \usepackage[nocompress]{cite}
\else
  \usepackage{cite}
\fi

\ifCLASSINFOpdf
  \usepackage[pdftex]{graphicx}
  \graphicspath{{./fig/}}
  \DeclareGraphicsExtensions{.pdf,.png,.eps}
\else
\fi

\usepackage{xcolor}

\usepackage{amsmath,amsthm}

\newtheorem{lemma}{Lemma}

\usepackage{algorithm}
\usepackage{algorithmicx}
\usepackage[noend]{algpseudocode}

\algrenewcommand\algorithmicrequire{\textbf{Input:}}
\algrenewcommand\algorithmicensure{\textbf{Output:}}

\usepackage{enumitem}

\usepackage{booktabs}
\usepackage{multirow}

\ifCLASSOPTIONcompsoc
 \usepackage[caption=false,font=footnotesize,labelfont=sf,textfont=sf]{subfig}
\else
 \usepackage[caption=false,font=footnotesize]{subfig}
\fi

\usepackage{url}

\usepackage[unicode]{hyperref}

\usepackage{xcolor}
\definecolor{paradisepink}{RGB}{233, 67, 94}

\begin{document}

\bstctlcite{bstctl:etal, bstctl:nodash, bstctl:simpurl}

\title{SEAnet: A Deep Learning Architecture for\\Data Series Similarity Search}

\author{Qitong~Wang,~Themis~Palpanas%
\IEEEcompsocitemizethanks{\IEEEcompsocthanksitem Qitong Wang and Themis Palpanas are affiliated with LIPADE, Université Paris Cité. E-mail: qitong.wang@etu.u-paris.fr, themis@mi.parisdescartes.fr.}
}

\markboth{Journal of \LaTeX\ Class Files,~Vol.~14, No.~8, August~2015}%
{Wang \MakeLowercase{\textit{and} Palpanas}: SEAnet: A Deep Learning Architecture for Data Series Similarity Search}

\IEEEtitleabstractindextext{%
\begin{abstract}
  A key operation for massive data series collection analysis is similarity search.
  According to recent studies, SAX-based indexes offer state-of-the-art performance for similarity search tasks.
  However, their performance lags under high-frequency, weakly correlated, excessively noisy, or other dataset-specific properties.
  In this work, we propose Deep Embedding Approximation (DEA), a novel family of data series summarization techniques based on deep neural networks.
  Moreover, we describe SEAnet, a novel architecture especially designed for learning DEA, that introduces the Sum of Squares preservation property into the deep network design.
  We further enhance SEAnet with SEAtrans encoder.
  Finally, we propose novel sampling strategies, SEAsam and SEAsamE, that allow SEAnet to effectively train on massive datasets.
  Comprehensive experiments on 7 diverse synthetic and real datasets verify the advantages of DEA learned using SEAnet %
  in providing high-quality data series summarizations and similarity search results.
  This paper was published in %KDD conference~\cite{c21-kdd-wang-seanet} and 
  the IEEE TKDE journal~\cite{DBLP:journals/tkde/WangP23}.
\end{abstract}

\begin{IEEEkeywords}
  data series, similarity search, neural networks, sampling.
\end{IEEEkeywords}}

\maketitle

\IEEEdisplaynontitleabstractindextext

\IEEEpeerreviewmaketitle

\IEEEraisesectionheading{\section{Introduction}\label{sec:introduction}}

\IEEEPARstart{W}{ith} the rapid developments and deployments of modern sensors, massive data series\footnote{A data series, or data sequence, is an ordered sequence of points. 
The most common type of data series is time series, where the dimension that imposes the sequence ordering is time; though, this dimension could also be mass, angle, or position~\cite{j19-sigrec-Palpanas-report}.
} 
datasets are now being generated, collected and analyzed in almost every scientific domain~\cite{j19-sigrec-Palpanas-report}. 
Typical data series analysis techniques are querying~\cite{c19-vldb-Echihabi-return}, classification~\cite{DBLP:journals/datamine/FawazLFPSWWIMP20}, clustering~\cite{c15-sigmod-Paparrizos-kshape}, anomaly detection~\cite{sand}, and visualization~\cite{DBLP:journals/tvcg/GogolouTPB19}, for all of which similarity search plays a central role.
Data series similarity search aims to find the closest series in a dataset to a given query series according to a distance measure, 
such as Euclidean distance, which is one of the most widely used~\cite{DBLP:journals/datamine/WangMDTSK13}. 
Similarity search can be divided into exact search and approximate search~\cite{c18-vldb-Echihabi-lernaean}.
Approximate similarity search may not always produce the exact answers, but in most cases it produces answers that are very close to the exact ones~\cite{c19-vldb-Echihabi-return}.
Thus, it is %
widely used on massive series collections to enable interactive data exploration and other latency-bounded applications~\cite{pros}. 
In this work, we focus on approximate similarity search under Euclidean distance.

Indexes are widely employed to speed up data series similarity search~\cite{c18-vldb-Echihabi-lernaean,c19-vldb-Echihabi-return}.
Most indexes are based on summarized representations of the data series~\cite{DBLP:journals/datamine/WangMDTSK13} of lower dimensionality\footnote{In the data-series literature, \emph{dimensionality} is used interchangeably with \emph{length}, to refer to the number of values of a univariate series.}.
Symbolic Aggregate approXimation (SAX)~\cite{c08-kdd-Shieh-isax} %
is a popular and effective discretized summarization. %
SAX-based indexes~\cite{messijournal} are the state-of-the-art (SOTA) data series similarity search methods~\cite{c18-vldb-Echihabi-lernaean,c19-vldb-Echihabi-return}.
\begin{figure}[tb]
  \centering
  \subfloat[PAA works to approximate and reconstruct a RandWalk series]{
    \label{fig:motivation-rw}
    \includegraphics[width=.95\linewidth]{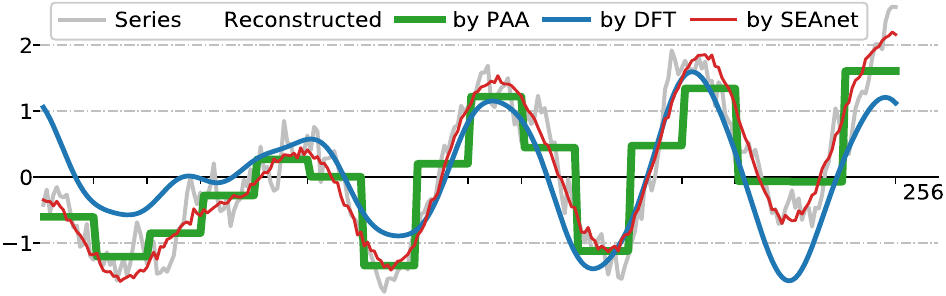}}
  \hfill
  \subfloat[PAA fails to approximate and reconstruct a Deep1B series]{
    \label{fig:motivation-deep}
    \includegraphics[width=.95\linewidth]{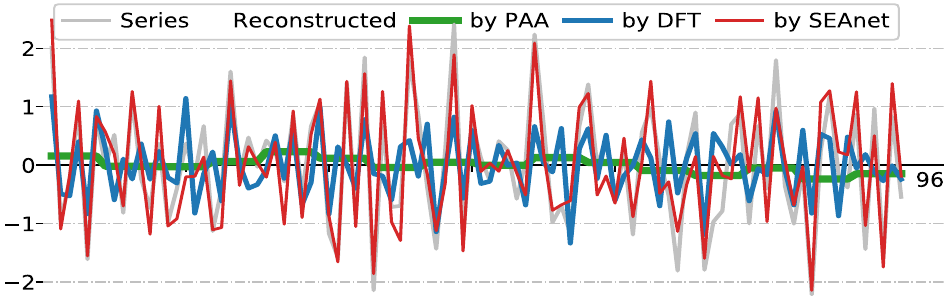}}
  \caption{Case studies where PAA and DFT work or fail to approximate and reconstruct series from RandWalk and Deep1B datasets. 
  In both cases, DEA works to approximate and reconstruct series.
  All summarizations use the same memory budget.}
  \label{fig:motivation}
\end{figure}

Nevertheless, SAX-based indexes
suffer from the problem that SAX fails in hard datasets with specific properties~\cite{j20-kais-Levchenko-bestneighbor}.
Since SAX is the symbolization of Piecewise Aggregate Approximation (PAA)~\cite{c08-kdd-Shieh-isax}, %
failure of PAA to correctly represent some data series directly translates to failure of the PAA-based SAX.
Figure~\ref{fig:motivation} illustrates a working and a failing case.
The high frequency of the Deep1B series (Figure~\ref{fig:motivation-deep}) implies more periodic intervals than the available SAX words: each PAA segment has to average values over $\ge$1 intervals, leading to similar PAA values across different segments, and to indistinguishable SAX words across different series.
Introducing more SAX words could alleviate the problem, but would lead to an undesirably long summarization that could not be effectively indexed. 

In this work, we propose to build a data series index based on \textbf{\textit{D}}eep \textbf{\textit{E}}mbedding \textbf{\textit{A}}pproximations (DEA), i.e., data series summarizations derived from embeddings learned using deep neural networks.
Embedding techniques, or representation learning, %
is to learn vectors possessing necessary latent information for %
downstream applications. 
Its success in data series, as well as high-dimensional vectors 
has been reported in many applications~\cite{DBLP:conf/wims/EchihabiZP20}.
Embedding techniques have been proven to be capable of capturing frequency~\cite{DBLP:conf/kdd/WangWLW18} and other latent properties.
However, to the best of our knowledge, data series embedding has not been adapted to and evaluated for similarity search.

\begin{figure}[tb]
  \centering
  \subfloat[PAA-based SAX symbolization]{
    \includegraphics[width=0.98\linewidth]{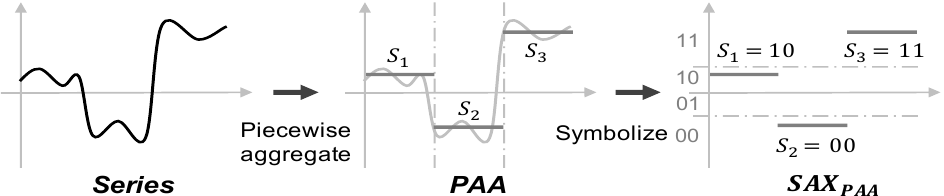}}
  \hfill
  \subfloat[DEA-based SAX symbolization]{
    \includegraphics[width=0.98\linewidth]{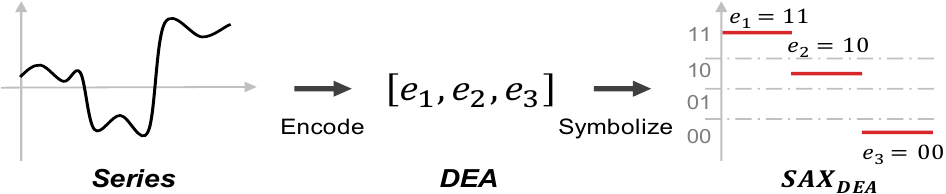}}
  \caption{Replace PAA by DEA for SAX symbolization.}
  \label{fig:intuition}
  \vspace*{-0.5\baselineskip}
\end{figure}

In the case of data series similarity search, DEA may replace PAA, and then be symbolized and indexed by an iSAX index as illustrated in Figure~\ref{fig:intuition}.
DEA targets to preserve original pairwise distances in the lower-dimensional DEA space.
Thus, it is naturally capable of being symbolized into SAX, on which an iSAX index can be built.
Our work shows that compared to PAA and (PAA-based) SAX, DEA better preserves pairwise distances, leading to a more effective index for data series similarity search.

To effectively learn DEA on massive series collections, we propose a novel autoencoder architecture SEAnet (\textbf{\textit{SE}}ries \textbf{\textit{A}}pproximation \textbf{\textit{net}}work).
SEAnet's basic structure follows a full-preactivation ResNet~\cite{DBLP:conf/eccv/HeZRS16}.
It adopts the idea of exponentially increasing dilations, which has been verified to be efficient for data series applications~\cite{DBLP:journals/corr/abs-1803-01271}.
  Moreover, we further design a new SEAtrans encoder, enhanced by Transformer blocks~\cite{c21-iclr-Dosovitskiy-vit} to improve the fixed global dependence framed by deeper dilated layers.
In contrast to existing convolutional autoencoders for series embedding~\cite{c19-neurips-Franceschi-embedding}, SEAnet comprises both an encoder and a decoder.
We argue (and experimentally verify) that a decoder is necessary to learn high-quality DEA for similarity search.
Moreover, SEAnet is the first architecture to formally introduce the principle of \textbf{\textit{S}}um \textbf{\textit{o}}f \textbf{\textit{S}}quares (SoS) preservation, and incorporate it into the network design. 
SoS preservation aims to keep the sum of squared values invariant throughout the transformations. 
We observe that defining new axes based on the largest SoS is equivalent to selecting the largest eigenvalues in eigenvalue-based linear dimensionality reductions on z-normalized datasets (i.e., mean=$0$, stddev=$1$)~\cite{j87-cils-Wold-pca}. 
They both aim at preserving the largest variances in the dataset through linear transformations.
In this sense, SoS could be regarded as an indicator of the quality of the transformation to a reduced dimensionality space performed by SEAnet (or other deep network architectures).
Hence, we introduce SoS as an invariant to regularize SEAnet and other networks, and demonstrate its benefits.

Finally, we observe that training a deep neural network on very large sequence collections is prohibitively expensive. 
Thus, for efficient training, we propose SEAsam (SEA-sampling), a novel sampling strategy based on a sortable data series summarization~\cite{DBLP:journals/pvldb/KondylakisDZP18}. 
SEAsam enables SEAnet (and other networks) to effectively fit %
a large dataset, leading to improved performance. 
Moreover, based on the observation that the raw data series are not the only sampling space to be preserved, we extend SEAsam by SEAsamE.
SEAsamE offers to SEAnet representative training samples covering all three major sampling spaces, i.e., the raw data series, the data series pairs, and the reconstruction errors.

Comprehensive experiments verify that, compared to PAA and DEA generated by other SOTA architectures (FDJNet~\cite{c19-neurips-Franceschi-embedding}, TimeNet~\cite{DBLP:conf/esann/MalhotraTVAS17} and InceptionTime~\cite{DBLP:journals/datamine/FawazLFPSWWIMP20}),
the DEA generated by SEAnet is more effective in preserving the original pairwise distances in the lower-dimensional summarized space.
This advantage also leads to more accurate approximate similarity search across several synthetic and real data series collections with diverse properties.

\noindent{{\bf [Contributions]}}
Our contributions\footnote{
  Compared to the previous conference version~\cite{c21-kdd-wang-seanet}, this paper describes an enhanced encoder architecture and a new sampling method.
} are as follows. 
\begin{enumerate}[noitemsep, topsep=0pt, wide=\parindent]
  \item We propose the use of deep learning embeddings for data series similarity search. 
  We introduce novel Deep Embedding Approximations, and show how these can be used to index the original data series and then support (approximate) similarity search queries.
  Our results can be used as a blueprint to facilitate further progress in this area.

  \item We propose SEAnet, a novel architecture that is specifically built to support high-quality DEA and similarity search. 
  SEAnet incorporates modern architectural elements designed for data series applications, including a full-preactivation ResNet and exponentially increasing dilations. 
    We extend SEAnet with a new SEAtrans encoder to provide learnable global dependence for deeper layers.

  \item We introduce and formalize the principle of Sum of Squares (SoS) preservation. 
  SoS preservation is a general principle for any architecture to learn high-quality DEA for dimensionality reduction.
  We explain how it can benefit the DEA architectures (including SEAnet), and how to incorporate it into the architecture designs.

  \item We propose SEAsam, a novel sampling strategy for massive data series collections, enabling effective training for the deep models.
  SEAsamE further extends SEAsam by exploiting three major sampling spaces in DEA learning
  and facilitates deep model training.
  
  \item We also describe alternative deep architectures for DEA, based on the SOTA designs of FDJNet, TimeNet, and InceptionTime. 
  We explain how our ideas can be applied on these architectures, and study in detail their performance.

  \item Comprehensive experiments on three synthetic datasets and four real-world datasets verified the effectiveness of DEA and SEAnet for data series summarization and approximate similarity search, outperforming traditional iSAX-based solutions, as well as three other SOTA RNN and CNN architectures for series embedding. 
  Datasets, codes and pre-trained models are available online\footnote{\url{https://helios.mi.parisdescartes.fr/~themisp/seanet/}}.
\end{enumerate}

\section{Related Work}\label{sec:literature}

\noindent{\bf [Data Series Indexes]}
The most prominent data series indexing techniques can be categorized into 
graph-based indexes~\cite{DBLP:journals/pami/MalkovY20},
inverted indexes~\cite{DBLP:journals/pami/BabenkoL15},
Locality Sensitive Hashing (LSH)~\cite{DBLP:journals/pvldb/HuangFZFN15},
optimized scans~\cite{DBLP:conf/cikm/FerhatosmanogluTAA00}, and tree-based indexes~\cite{c08-kdd-Shieh-isax,c13-vldb-Wang-dstree}. 
Recent studies~\cite{c18-vldb-Echihabi-lernaean, c19-vldb-Echihabi-return} have demonstrated that the SAX-based indexes~\cite{messijournal} achieve SOTA performance under several conditions.

SAX~\cite{c08-kdd-Shieh-isax}
is a discretized series summarization based on PAA. %
PAA first transforms the data series into $l$ real values, and then SAX quantizes each PAA value using discrete symbols. 
SAX quantization utilizes a scalar alphabet of size $a$.
The number of bits used to encode $a$ is called the cardinality of SAX.
iSAX~\cite{c08-kdd-Shieh-isax} enables the comparison of SAXs of different cardinalities, which makes SAX indexable. 
iSAX2.0 and iSAX2+~\cite{j14-kis-Camerra-isax2+} improve iSAX with a better node-splitting policy and a bulk-loading strategy, ADS+~\cite{j16-vldbj-Zoumpatianos-ads} makes iSAX adaptive, ULISSE\cite{j20-vldbj-Linardi-ulisse} supports to variable-length queries, DPiSAX~\cite{DBLP:journals/tkde/YagoubiAMP20} and Odyssey~\cite{odyssey} make iSAX distributed, ParIS~\cite{c18-bigdata-Peng-paris}, %
MESSI~\cite{messijournal} and SING ~\cite{sing} make iSAX highly concurrent, Dumpy~\cite{c23-sigmod-wang-dumpy} introduces an adaptive node splitting algorithm that leads to a multi-ary data structure, while Hercules~\cite{hercules} and Elpis~\cite{elpis} combine the iSAX and APACA~\cite{c13-vldb-Wang-dstree} summarizations.
Coconut-Trie and Coconut-Tree~\cite{j19-vldbj-Kondylakis-coconut} transform SAX into a sortable summarization to enable more bulk loading opportunities.

In this work, we employ MESSI~\cite{messijournal} as our indexing and query answering framework, because its memory-based concurrent design leads to SOTA performance. 
Moreover, since all SAX-based indexes share the same summarization techniques, improving the query answers quality of MESSI translates to the same improvements for all SAX-based indexes.

\noindent{\bf [Learned Data Series Embeddings]}
we observe that few recent works~\cite{DBLP:conf/esann/MalhotraTVAS17, c19-neurips-Franceschi-embedding} focus on data series representation learning, none of which targets similarity search.
Autoencoder is a category of deep neural networks to learn embeddings~\cite{c19-neurips-Franceschi-embedding}.
The encoder component of an autoencoder maps a dataset to lower dimensional vectors, i.e., embeddings; the decoder reverses this procedure.
It has been empirically verified in many domains that embedding learns useful latent information~\cite{DBLP:conf/kdd/WangWLW18}.

TimeNet~\cite{DBLP:conf/esann/MalhotraTVAS17} and FDJNet~\cite{c19-neurips-Franceschi-embedding} are two SOTA architectures for data series representation learning.
TimeNet deploys a multi-layer GRU %
to embed and reconstruct series.
FDJNet is based on Temporal Convolutional Network (TCN) %
to embed series.
Aside from representation learning, deep models are exploited in other data series applications.
The SOTA series classification method, InceptionTime~\cite{DBLP:journals/datamine/FawazLFPSWWIMP20} employs the Inception module for data series classification.
However, neither TimeNet, FDJNet nor InceptionTime has been adapted and evaluated for similarity search before.

In contrast to all the above methods, the proposed SEAnet not only adopts design choices suitable for distance preservation, but also introduces a novel and general principle of SoS preservation for dimensionality reduction.

Other data series embedding techniques that do not use neural networks include RWS~\cite{j18-arxiv-Wu-embedding}, SPIRAL~\cite{c19-ijcai-Lei-embedding}, and GRAIL~\cite{c19-Paparrizos-vldb-grail}; 
these are built upon various matrix factorization techniques,
e.g., Kernel Principal Component Analysis (KPCA).%
However, their high time and space complexities (mostly $\mathcal{O}(mn^2)$), prevents such techniques from being deploying on massive collections with hundreds of millions of data series, which is our goal.

\section{Background}\label{sec:background}

A \textbf{data series}, $S=\{p_1, ..., p_m\}$, is a sequence of points, where each point $p_i=(v_i,t_i)$, $1 \le i \le m$ is associated to a real value $v_i$ and a position $t_i$. 
The position corresponds to the order of this value in the sequence.  
We call $m$ the \emph{length} of the series. 
$\mathcal{S}$ denotes a collection of data series, i.e., $\mathcal{S}=\{S_1, ..., S_n\}$. 
We call $n$ the \emph{size} of the series collection.

A \textbf{summarization} $E=\{e_1, ..., e_l\}$ of a series $S$ is a lower, $l$-dimensional representation, which preserves some desired properties of $\mathcal{S}$.
For similarity search, the target property is pairwise distance space structure of $\mathcal{S}$, i.e., $\forall S_i, S_j \in \mathcal{S}, d'(E_i, E_j) \approx d(S_i, S_j)$, where $E_i$, $E_j$ are summarizations of $S_i$, $S_j$, $d(\cdot, \cdot)$, and $d'(\cdot, \cdot)$ are distance measures in series and summarization spaces, respectively.

The \textbf{distance measure} $d$ deployed in our work is Euclidean distance, which is a widely adopted and effective measure for data series similarity search~\cite{DBLP:journals/datamine/WangMDTSK13}.%
$d'$ in the summarization space needs not be the same as $d$, e.g., for PAA, $d'(\cdot, \cdot)=\frac{\sqrt{m}}{\sqrt{l}} \times d(\cdot, \cdot)$.
$d'$ for DEA is the same as PAA if it's scaled for SoS preservation.
Otherwise, $d'(\cdot, \cdot)=d(\cdot, \cdot)$.

Given a query series $S_q$ of length $m$, a series collection $\mathcal{S}$ of size $n$ and length $m$, a distance measure $d$, \textbf{similarity search} targets to identify the series $S_c \in \mathcal{S}$ whose distance to $S_q$ is the smallest, i.e., $\forall S_o \in \mathcal{S}, S_o \neq S_c, d(S_c, S_q) \leq d(S_o, S_q)$.
Instead of finding the exact answer $S_c$, \textbf{approximate similarity search} targets to find very fast a series $S_c' \in \mathcal{S}$, whose %
distance to $S_q$ is small, without guaranteeing it is the smallest.
$\frac{d(S_c, S_q)}{d(S_c', S_q)} \in (0,1]$ is called $S_c'$'s \textbf{tightness}.

\section{DEA-based Similarity Search}\label{sec:method}

In this section, we present the proposed DEA-based data series similarity search framework, including the SEAnet architecture. %
The complete workflow is illustrated in Figure~\ref{fig:method-workflow}.
Given a series collection, SEAsam first draws representative samples to train SEAnet.
After SEAnet converges, it embeds all series into DEAs, which are further discretized into SAXs.
Thus, DEA-based SAXs are used in an iSAX index, where approximate similarity search can be efficiently conducted.

\begin{figure}[tb]
  \centering
  \includegraphics[width=0.9\linewidth]{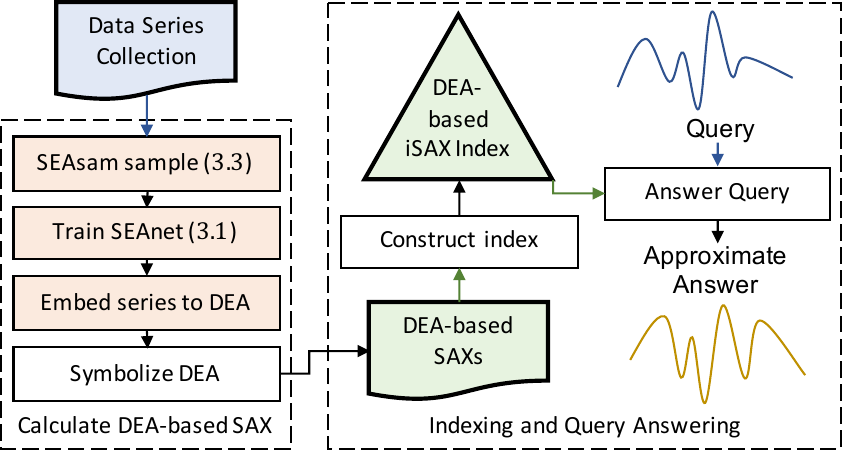}
  \caption{Workflow of DEA-based approximate similarity search.}
  \label{fig:method-workflow}
\end{figure}

\begin{figure*}[tb]
  \centering
  \subfloat[SEAnet Architecture\label{fig:method-arch}]{\includegraphics[width=.65\textwidth]{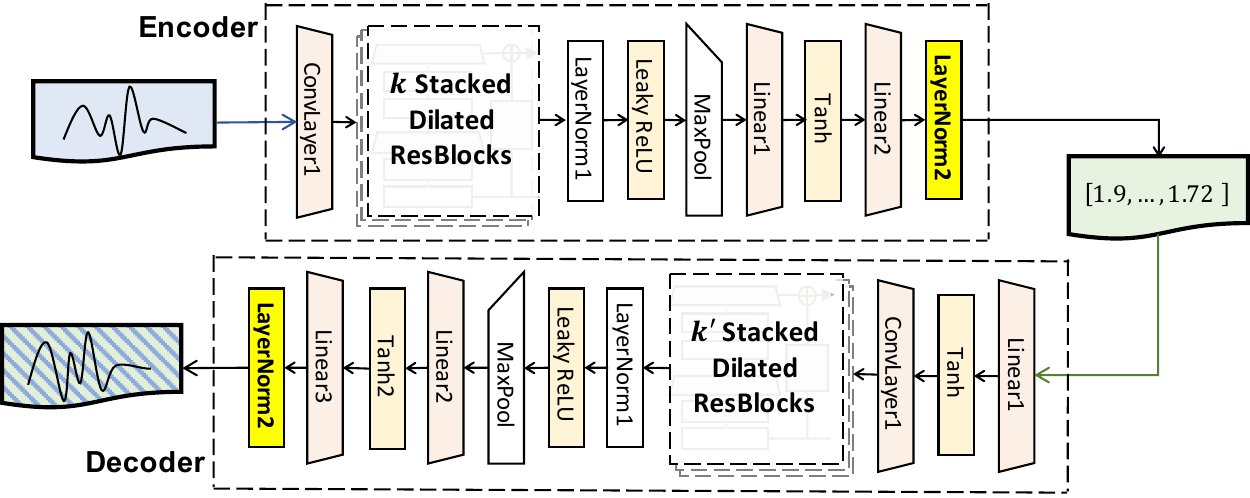}}
  \hspace{.05\textwidth}
  \subfloat[Dilated ResBlock in SEAnet\label{fig:method-block}]{\includegraphics[width=.25\textwidth]{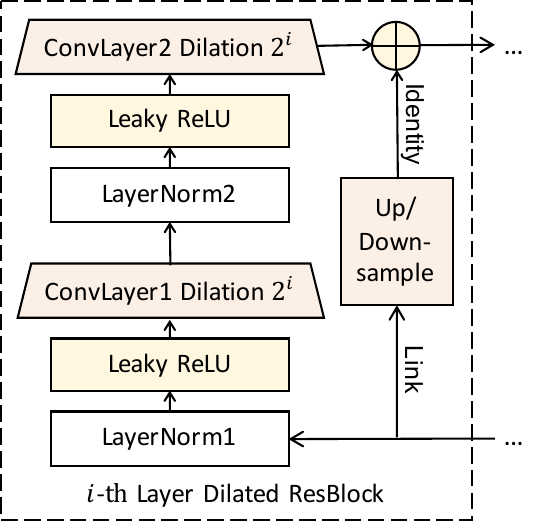}}
  \caption{The SEAnet architecture and the details of a dilated full-preactivation ResBlock.}
  \label{fig:arch-seanet}
\end{figure*}

SEAnet is a novel autoencoder proposed to learn high-quality DEA.
Unlike FDJNet~\cite{c19-neurips-Franceschi-embedding} and other SOTA convolutional architectures for series embedding,
SEAnet is composed of both an encoder and a decoder.
The inclusion of the decoder is beneficial, since it can act as a regularizer to prevent SEAnet from falling into bad local optima, where DEAs become very similar to each other (and hence, not suitable for similarity search).
SEAnet stacks dilated full-preactivation ResBlocks~\cite{DBLP:conf/eccv/HeZRS16}.
Its dilations increase exponentially with deeper layers.
Moreover, %
SEAnet introduces the principle of SoS preservation for lower dimensionality representation learning.
We present the details of SEAnet's design in Section~\ref{sec:architecture}, and further discuss the SoS preservation principle in Section~\ref{sec:ss}.

  To improve the fixed global dependence framed by the exponentially-increasing dilations, we propose a new SEAtrans encoder. %
  SEAtrans encoder replaces SEAnet's deeper ResBlocks by Transformer blocks (TransBlocks)~\cite{c21-iclr-Dosovitskiy-vit} to aggregate high-level information with learnable dependence.
  We present the SEAtrans encoder in Section~\ref{sec:arch-seatrans}.

Our SEAsam strategy makes use of the inverse iSAX sortable summarization~\cite{DBLP:journals/pvldb/KondylakisDZP18}.
In this scheme, SAX bits are interleaved, such that all significant bits across SAX words precede less significant bits, which renders the resulting representation, InvSax, sortable.
This order has been shown to imply the distribution information of the dataset~\cite{DBLP:journals/pvldb/KondylakisDZP18}.
Thus, sampling proceeds by drawing series of equal intervals from dataset sorted in InvSAX order (cf. Section~\ref{sec:sampling}).

SEAsamE further extends SEAsam by exploiting three major sampling spaces for DEA learning, i.e., the spaces of raw data series, data series pairs and reconstruction errors.
Hence, not only does it better represent the data series collection, but it also facilitates the convergence of SEAnet.
We present SEAsamE in Section~\ref{sec:seasam3}.

Since DEA acts as a replacement for PAA, %
indexing and query answering procedures remain similar %
to an ordinary iSAX.
Considering the low-latency requirement of approximate similarity search applications, our design follows MESSI~\cite{messijournal}, the SOTA concurrent in-memory iSAX index.

\subsection{\texorpdfstring{SEA\MakeLowercase{net}}{SEAnet} Architecture}\label{sec:architecture}

The SEAnet architecture is illustrated in Figure~\ref{fig:method-arch}.
It comprises a convolutional encoder and a homogeneous decoder.
We first overview the architecture and then present details.

The first part of the SEAnet encoder, from ConvLayer1 to MaxPool, comprises $k$ stacked dilated full-preactivation ResBlocks~\cite{DBLP:conf/eccv/HeZRS16} for nonlinear transformations.
The dilated ResBlock is illustrated separately in Figure~\ref{fig:method-block}.
Its dilation increases exponentially with the depth of the layer.
Compared to constant dilations, this has been verified to effectively broaden the receptive fields for data series applications~\cite{DBLP:journals/corr/abs-1803-01271}.
The dimension of latent vectors and number of channels are the same as the dimension of the input series.
Thus, after MaxPooling within channels and squeezing, the first part could be regarded as an equi-length nonlinear transformation.
The second part of the SEAnet encoder, from Linear1 to LayerNorm2, comprises two linear layers for dimensionality reduction.
Unlike most existing encoders with linear final layers~\cite{c19-neurips-Franceschi-embedding}, the SEAnet encoder is finalized by LayerNorm2, which is specifically designed using the SoS preservation principle. %
We elaborate on this in Section~\ref{sec:ss}.

The SEAnet decoder corresponds to the encoder, except for a preceding Tanh-activated linear layer, introduced to adjust dimensionality.
We claim that the encoder and decoder need not be homogeneous.
Although encoder-only architectures is the popular choice~\cite{c19-neurips-Franceschi-embedding}, we argue (and experimentally verify) that the decoder is necessary in similarity search applications in order to regularize the DEAs, so that they are distinguishable among each other. 
This results to a better indexing structure, and to a more effective and efficient similarity search.

\begin{figure}[tb]
  \centering
  \includegraphics[width=0.9\linewidth]{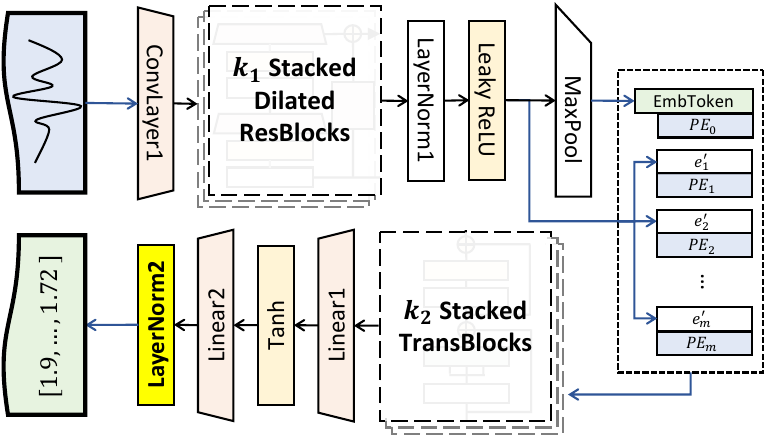}
  \caption{%
  The SEAtrans encoder architecture.}
  \label{fig:arch-seatrans}
\end{figure}

\noindent{\bf [Extending SEAnet with SEAtrans encoder]}\label{sec:arch-seatrans}
As shown in Figure~\ref{fig:arch-seatrans}, the SEAtrans encoder is composed by inserting $k_2$ stacked TransBlocks between the MaxPool and Linear1 layers.
A TransBlock consists of one multiheaded self-attention layer and one feedforward layer~\cite{c17-neurips-Vaswani-transformer}.
The input tokens to the stacked TransBlocks are the ResBlocks' output vectors split by dimensions, concatenated by learnable positional encodings~\cite{c21-kdd-Zerveas-mvts}.
Each token corresponds to a dimension.
Hence, each token represents the latent feature across all channels of each dimension.
In addition, we append a special embedding token (EmbToken) as the learning target, following the convention of classification token~\cite{c21-iclr-Dosovitskiy-vit}.

  The SEAtrans encoder complies to the SEAnet framework and is open to decoder architecture choices.
  SoS preservation (LayerNorm2 in Figure~\ref{fig:arch-seatrans}) and SEAsam sampling also work for SEAtrans.
  We use the SEAtrans encoder with the the regular SEAnet decoder. %

\noindent{\bf [Training Procedure]}\label{sec:training}
We first provide the intuitions behind SEAnet training, and then the mathematical formalization.

SEAnet is trained pairwisely by mini-batched Stochastic Gradient Descent (SGD).
Its loss function is a linear combination of two components: 
(1) The Compression Error $L_{C}$, i.e., the average differences between the original distance of data series pairs $(S_i, S_j)$ and their DEA distance.
$L_{C}$ evaluates whether original distances are well-preserved in the DEA space.
(2) The Reconstruction Error $L_{R}$, i.e., the average distance between the original series $S_i$ and the reconstructed series.
$L_{R}$ evaluates how well the original series can be reconstructed using SEAnet.
Moreover, we divide both the series and their DEAs by the square root of their lengths in $L_{C}$ and $L_{R}$.
Together with the LayerNorm2 in SEAnet, this scaling not only preserves SoS for better dimensionality reduction, but also stabilizes the gradient propagation.
The rationale behind these steps is further explained in Section~\ref{sec:ss}.

During each training epoch, for every series $S_i$ in the training set, a random but different series $S_j$ is drawn from the training set to form pairs $(S_i, S_j)$ for $L_{C}$.
Since both $S_i$ and $S_j$ are from the training set, $S_j$ is detached (%
treated as constants instead of %
variables) to prevent its gradients from being back-propagated twice within one epoch.

We now formalize $L_{C}$ and $L_{R}$. 
First, we introduce the formula of SEAnet encoder as $E_i = \phi(S_i \vert \Theta_\phi)$ and decoder as $\widetilde{S}_i = \psi(E_i \vert \Theta_\psi) = \psi\cdot\phi(S_i \vert \Theta_{\phi,\psi})$, where $\phi$ and $\psi$ are mappings with parameters $\Theta_\phi$ and $\Theta_\psi$, $S_i$ is a series, $E_i$ is $S_i$'s DEA and $\widetilde{S}_i$ is $S_i$'s reconstruction.
Without scaling, $L_{C} = \frac{1}{N_p} \sum_{(S_i, S_j) \in \mathcal{S} \times \mathcal{S}} \vert d(S_i, S_j) - d(\phi(S_i), \phi(S_j)) \vert$, where ${N_p}$ is the number of sampled series pairs $(S_i, S_j)$, and $L_{R} = \frac{1}{N_s} \sum_{S_i \in \mathcal{S}} d(S_i, \psi\cdot\phi(S_i))$, where ${N_s}$ is the number of sampled series $S_i$.
With scaling, $L_{C} = \frac{1}{N_p} \sum_{(S_i, S_j) \in \mathcal{S} \times \mathcal{S}} \vert \frac{1}{\sqrt{m}}d(S_i, S_j) - \frac{1}{\sqrt{l}} d(\phi(S_i), \phi(S_j)) \vert$, and $L_{R} = \frac{1}{N_s} \sum_{S_i \in \mathcal{S}} \frac{1}{\sqrt{m}} d(S_i, \psi\cdot\phi(S_i))$.
Thus, the loss function $L = L_{C} + \alpha L_{R}$, where $\alpha$ is a hyperparameter to balance between $L_{C}$ and $L_{R}$.

\subsection{Sum of Squares Preservation}\label{sec:ss}

We propose a SoS preservation framework for effective DEA learning.
SoS preservation has been observed before~\cite{j87-cils-Wold-pca}, but to the best of our knowledge, has never been formally introduced to representation learning.
Given an $n$$\times$$m$ matrix $M$, where %
row $M_{i,*}$ corresponds to a series and %
column $M_{*,j}$ corresponds to a position, SoS $=\sum_{i,j}M_{i,j}^2$.
Note that in linear dimensionality reductions on z-normalized datasets, defining new axes based on the largest SoS is equivalent to selecting the largest eigenvalues, with the purpose of preserving information (measuerd by total variance) about the dataset through linear transformations~\cite{j87-cils-Wold-pca}.
Thus, SoS could be regarded as a useful indicator of the transformation quality.
In SEAnet, by using SoS preservation as an invariant regularization, the quality of DEAs is upheld from this perspective, such that the neural network can focus on learning the nonlinear transformations. 

Given the (z-normalized) input dataset, the proposed SoS preservation requires two steps: (1) z-normalizing the output of encoder (DEAs) and decoder (the reconstructed series); and (2) dividing the series and their DEAs by the square of their lengths in loss function $L$.
Note that step (2) also takes the neural network convergence into consideration, as it benefits from the stabilization of the latent variables and variances~\cite{DBLP:journals/corr/BaKH16}.
We now elaborate on the design of SEAnet under this principle.

Considering the fact that z-normalizing data series is a very common operation~\cite{j19-sigrec-Palpanas-report},
we constrain SEAnet to keep SoS invariant by forcing each DEA to preserve the SoS of its corresponding series.
This is achieved by the following two steps: (1) z-normalizing the output of encoder, i.e., DEA; and (2) scaling DEA by $\sqrt{{m/l}}$.
We formalize this idea below, 
and prove it by Lemma~\ref{lem:ss}.

\begin{lemma}\label{lem:ss}
Given a z-normalized series dataset $\mathcal{S}$ of size $n$ and its DEAs $\mathcal{E}$, $\mathcal{E}'$ is derived by z-normalizing and then multiplying $\mathcal{E}$ by $\frac{\sqrt{m}}{\sqrt{l}}$.
$\mathcal{E}'$'s SoS is the same to $\mathcal{S}$, that is
{\footnotesize
\begin{equation}
\label{eq:scale}
 \sum_{S_i \in \mathcal{S}} \sum_{p^{i}_{j} \in S_i} {p^i_j}^2 = \sum_{E_i \in \mathcal{E}} \sum_{e^{i}_{j} \in E_i} (\frac{\sqrt{m}}{\sqrt{l}} \frac{e^i_j - \overline{e^i}}{\sigma_{e^i}})^2
\end{equation}
}
where $\overline{e^i}$ and $\sigma_{e^i}$ are the mean and standard deviation of DEA $e^i$ (without loss of generality, we assume $\sigma_{e^i}\neq 0.$).
\end{lemma}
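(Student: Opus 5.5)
The identity reduces to a per-series statement, which we then sum over the $n$ series. The main ingredient is the fact that a z-normalized vector of length $k$ has sum of squares exactly $k$, provided we use the population convention for the standard deviation (dividing by the length), as the paper implicitly does when it writes mean$=0$, stddev$=1$.

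\emph{Left-hand side.} Fix $S_i\in\mathcal{S}$. Since $S_i$ is z-normalized, its mean is $0$ and its standard deviation is $1$. Therefore
\[
1=\sigma_{S_i}^2=\frac{1}{m}\sum_{j}(p^i_j-0)^2 ,
\]
so $\sum_j {p^i_j}^2=m$. Summing over the $n$ series gives $nm$ for the left-hand side of~(\ref{eq:scale}).

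\emph{Right-hand side.} Fix $E_i\in\mathcal{E}$, which has length $l$. By definition of the standard deviation, $\sum_j (e^i_j-\overline{e^i})^2 = l\,\sigma_{e^i}^2$. The assumption $\sigma_{e^i}\neq 0$ makes the division legitimate, and we obtain
\[
\sum_j\Bigl(\frac{e^i_j-\overline{e^i}}{\sigma_{e^i}}\Bigr)^2=l .
\]
The constant factor $\frac{m}{l}$ pulls out of the sum, so the inner sum on the right equals $\frac{m}{l}\cdot l=m$. Summing over the $n$ DEAs, since $|\mathcal{E}|=|\mathcal{S}|=n$, gives $nm$, which matches the left-hand side.

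\emph{Main point of care.} The only real obstacle is the normalization convention. If the sample standard deviation (dividing by $k-1$) were used, the per-vector sums would be $m-1$ and $l-1$, and the scaling $\sqrt{m/l}$ would no longer give exact equality. I would therefore state explicitly at the outset that both z-normalizations use the population standard deviation, consistent with LayerNorm, which divides by the length. With that convention fixed, the argument is a direct two-line computation. I would also note that the equality in fact holds per series, $\sum_j {p^i_j}^2 = \sum_j (\cdot)^2$ for each $i$, which is stronger than the aggregate statement.
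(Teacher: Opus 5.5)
Your proof is correct and follows the same route as the paper. Both arguments show that each z-normalized series contributes $m$ to the left side and each rescaled, z-normalized DEA contributes $\frac{m}{l}\cdot l = m$ to the right side, giving $nm$ on both sides; your explicit note that the population (divide-by-length) standard deviation is needed makes precise a convention the paper uses only implicitly.
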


\begin{proof}
\renewcommand{\qedsymbol}{}
First, as $\mathcal{S}$ is z-normalized with mean $=0$ and variance $=1$, the left side
{\footnotesize
\begin{equation*}
\begin{aligned}
\sum_{S_i \in \mathcal{S}} \sum_{p^{i}_{j} \in S_i} {p^i_j}^2 
&= \sum_{S_i \in \mathcal{S}} m \times \frac{\sum_{p^{i}_{j} \in S_i} (p^i_j - 0)^2}{m}\\
&= \sum_{S_i \in \mathcal{S}} m \times 1 = nm
\end{aligned}
\end{equation*}
}
For the right side of the equation,
{\footnotesize
\begin{equation*}
\begin{aligned}
\sum_{E_i \in \mathcal{E}} \sum_{e^{i}_{j} \in E_i} (\frac{\sqrt{m}}{\sqrt{l}} \frac{e^i_j - \overline{e^i}}{\sigma_{e^i}})^2
= &\sum_{E_i \in \mathcal{E}} \sum_{e^{i}_{j} \in E_i} \frac{m}{l} \frac{(e^i_j - \overline{e^i})^2}{\sigma_{e^i}^2}\\
= &\sum_{E_i \in \mathcal{E}} m \times \frac{1}{\sigma_{e^i}^2} \times \frac{\sum_{e^{i}_{j} \in E_i} (e^i_j - \overline{e^i})^2}{l}\\
= &\sum_{E_i \in \mathcal{E}} m \times 1 = nm
\end{aligned}
\end{equation*}
}
Thus, the conclusion holds that
{\footnotesize
\begin{equation*}
\sum_{S_i \in \mathcal{S}} \sum_{p^{i}_{j} \in S_i} {p^i_j}^2 = nm = \sum_{E_i \in \mathcal{E}} \sum_{e^{i}_{j} \in E_i} (\frac{\sqrt{m}}{\sqrt{l}} \frac{e^i_j - \overline{e^i}}{\sigma_{e^i}})^2
\end{equation*}
}
\end{proof}

\noindent{\bf [Scaling in Losses]}\label{sec:scaling}
Scaling the DEAs raises another problem: its values will have a much larger variance.
This hinders the convergence of the network, because of the internal covariate shift and other problems~\cite{DBLP:journals/corr/BaKH16}.
Latent variables with $\mu=0$ and $\sigma=1$ are widely considered among the best choices for the gradients' back-propagation~\cite{DBLP:journals/corr/BaKH16}.
Hence, we keep the DEAs z-normalized, and scale the series by $\sqrt{{1/m}}$
and DEA by $\sqrt{{1/l}}$
in $L_{C}$ and $L_{R}$.
This does not penalize SoS preservation, as shown in Equation~\ref{eq:scale-loss}, which is equivalent to Equation~\ref{eq:scale} by dividing $m$ on both sides.
{\footnotesize
\begin{equation}\label{eq:scale-loss}
 \sum_{S_i \in \mathcal{S}} \sum_{p^{i}_{j} \in S_i} (\frac{1}{\sqrt{m}}{p^i_j})^2 = n = \sum_{E_i \in \mathcal{E}} \sum_{e^{i}_{j} \in E_i} (\frac{1}{\sqrt{l}} \frac{e^i_j - \overline{e^i}}{\sigma_{e^i}})^2
\end{equation}
}

By further analyzing the distributions of distances under an ideal setting, we find that scaling series by $\sqrt{{1/m}}$
and DEAs by $\sqrt{{1/l}}$
in $L_{C}$ and $L_{R}$ would largely stabilize their pairwise distance distributions.
We formalize and prove this observation in Lemma~\ref{lem:loss}.

\begin{lemma}\label{lem:loss}
 Given two series $S_1=\{p^1_1, ..., p^1_m\}$ and $S_2=\{p^2_1, ..., p^2_m\}$.
 Ideally, assume $p^1_1, ..., p^1_m, ..., p^2_1, ..., p^2_m$ are i.i.d. $\sim N(0,1)$.
 Scaling $S_1$ and $S_2$ by $\frac{1}{\sqrt{m}}$ will reduce the mean and variance of $d(S_1, S_2)$ by $\sqrt{m}$ and $m$, respectively.
\end{lemma}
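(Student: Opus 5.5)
The plan is to reduce the claim to a scaling identity and then pin down the unscaled distribution. Euclidean distance is positively homogeneous, so scaling both series by $\frac{1}{\sqrt{m}}$ gives
\[
d\Bigl(\tfrac{1}{\sqrt{m}}S_1, \tfrac{1}{\sqrt{m}}S_2\Bigr) = \tfrac{1}{\sqrt{m}}\, d(S_1,S_2).
\]
Hence $\mathbb{E}$ is multiplied by $\frac{1}{\sqrt{m}}$ and $\mathrm{Var}$ by $\frac{1}{m}$, by linearity of expectation and $\mathrm{Var}(cX)=c^2\mathrm{Var}(X)$. This already gives the literal statement, whatever the distribution of $d(S_1,S_2)$.

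Since the lemma is meant to show stabilization, I would also compute the unscaled mean and variance explicitly. Set $X_j = p^1_j - p^2_j$. These are i.i.d.\ $N(0,2)$, so $d(S_1,S_2)^2 = \sum_j X_j^2 = 2Q$ with $Q\sim\chi^2_m$. Therefore $d(S_1,S_2) = \sqrt{2}\,\chi_m$. Using the chi distribution,
\[
\mathbb{E}[d] = \sqrt{2}\cdot\sqrt{2}\,\frac{\Gamma((m+1)/2)}{\Gamma(m/2)} = 2\,\frac{\Gamma((m+1)/2)}{\Gamma(m/2)},
\qquad
\mathrm{Var}(d) = 2m - \mathbb{E}[d]^2 .
\]

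For the stabilization remark I would use the asymptotics $\Gamma((m+1)/2)/\Gamma(m/2)\sim\sqrt{m/2}\,\bigl(1-\frac{1}{4m}+\dots\bigr)$. These give $\mathbb{E}[d]\approx\sqrt{2m}$ and $\mathrm{Var}(d)\to 1$. After scaling, $\mathbb{E}\approx\sqrt{2}$ and $\mathrm{Var}\approx 1/m$, independent of the series length. An equivalent route is the delta method on $d=\sqrt{2Q}$, with $\mathbb{E}Q=m$ and $\mathrm{Var}\,Q=2m$.

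The main obstacle is interpretive rather than technical. One has to read ``reduce by $\sqrt{m}$ and $m$'' as division by these factors, and keep the exact scaling claim separate from the asymptotic formulas, which are only illustrative. The homogeneity step carries the proof. The Gamma-function asymptotics are the only delicate computation, and I would cite the standard expansion rather than derive it.
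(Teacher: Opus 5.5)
Your proof is correct, but its core step differs from the paper's.

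The paper works entirely through the distribution. It identifies $d(S_1,S_2)\sim\chi_m(\sqrt{2})$ and $d(S_1/\sqrt{m},S_2/\sqrt{m})\sim\chi_m(\sqrt{2/m})$. It then writes both means and variances explicitly in terms of $\Gamma(\frac{m+1}{2})/\Gamma(\frac{m}{2})$ and takes ratios, in which the Gamma factors cancel.

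Your homogeneity observation, $d(cS_1,cS_2)=c\,d(S_1,S_2)$ for $c>0$, gives the factors $1/\sqrt{m}$ and $1/m$ immediately. It also shows that the i.i.d.\ Gaussian hypothesis plays no role in the literal claim: any law for which the mean and variance of $d$ exist will do. The cancellation in the paper's ratios is this same homogeneity, seen through the scale family $\chi_m(\sigma)$.

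What the paper's distributional route buys is the absolute size of the moments. That is what the surrounding ``stabilization'' argument and Table~\ref{tab:loss-scaling} actually use. Your supplementary computation, $\mathrm{E}[d]=2\Gamma(\frac{m+1}{2})/\Gamma(\frac{m}{2})$ and $\mathrm{Var}(d)=2m-\mathrm{E}[d]^2$, agrees exactly with the paper's formulas. Your Gamma-ratio asymptotics go a step further and explain the table's values (mean $\approx\sqrt{2}$ and variance $\approx 1/m$ after scaling), which the paper leaves implicit.
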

\begin{proof}
\renewcommand{\qedsymbol}{}
As $\forall i \in [1, ..., m]$, $p^1_i$ and $p^2_i$ are i.i.d. $\sim N(0,1)$, we have
{\footnotesize
\begin{equation*}
  \begin{aligned}
  p^1_i - p^2_i \sim N(\mu_1 - \mu_2, \sigma_1^2 + \sigma_2^2) &\sim N(0, 2)\text{,}\\
  d(S_1, S_2) = \sqrt{\sum_{i=1}^m (p^1_i - p^2_i)^2} &\sim \chi_m(\sqrt{2})
\end{aligned}
\end{equation*}
}
where $\chi_m(\sqrt{2})$ is a centered $\chi$ distribution with $m$ degree of freedom, derived from i.i.d. variables $\sim N(0, 2)$.

For $x \sim \chi_m(\sqrt{2})$, we have
{\footnotesize
\begin{equation*}
  \begin{aligned}
  \mu_{\chi_m(\sqrt{2})} &= \sqrt{2} \frac{\Gamma(\frac{m + 1}{2})}{\Gamma(\frac{m}{2})} \times \sqrt{2}\text{,} \\
  \sigma^2_{\chi_m(\sqrt{2})} &= 2 (\frac{m}{2} - (\frac{\Gamma(\frac{m + 1}{2})}{\Gamma(\frac{m}{2})})^2) \times 2
  \end{aligned}
\end{equation*}
}

Similarly, dividing $p^1_i$ and $p^2_i$ by $\sqrt{m}$, we have
{\footnotesize
\begin{equation*}
  \begin{aligned}
    \frac{p^1_i}{\sqrt{m}} \text{ and } \frac{p^2_i}{\sqrt{m}} \sim N(0, \frac{1}{m}) &\text{, } \frac{p^1_i}{\sqrt{m}} - \frac{p^2_i}{\sqrt{m}} \sim N(0, \frac{2}{m})\text{,}\\
    d(\frac{S_1}{\sqrt{m}}, \frac{S_2}{\sqrt{m}}) &\sim \chi_m(\sqrt{\frac{2}{m}})
\end{aligned}
\end{equation*}
}

And for $x \sim \chi_m(\sqrt{\frac{2}{m}})$,
{\footnotesize
\begin{equation*}
  \begin{aligned}
  \mu_{\chi_m(\sqrt{\frac{2}{m}})} &= \sqrt{2} \frac{\Gamma(\frac{m + 1}{2})}{\Gamma(\frac{m}{2})} \times \sqrt{\frac{2}{m}}\text{,} \\
  \sigma^2_{\chi_m(\sqrt{\frac{2}{m}})} &= 2 (\frac{m}{2} - (\frac{\Gamma(\frac{m + 1}{2})}{\Gamma(\frac{m}{2})})^2) \times \frac{2}{m}
  \end{aligned}
\end{equation*}
}

Thus, the conclusion holds that 
{\footnotesize
\begin{equation*}
  \begin{aligned}
  \frac{\mu_{\chi_m(\sqrt{\frac{2}{m}})}}{\mu_{\chi_m(\sqrt{2})}} = \frac{1}{\sqrt{m}} \text{, }
  \frac{\sigma^2_{\chi_m(\sqrt{\frac{2}{m}})}}{\sigma^2_{\chi_m(\sqrt{2})}} = \frac{1}{m}
  \end{aligned}
\end{equation*}
}
\end{proof}
  
Typical examples of scaling %
are presented in Table~\ref{tab:loss-scaling}. 
We make three observations:
(1) After scaling short series by $\sqrt{{256/m}}$,
the means of distance distributions are comparable to series of length 256.
This confirms that our design of SoS preservation is indeed helpful to preserve pairwise distances.
(2) However, the variance of distance distributions increases dramatically after scaling, e.g., 16$\times$ from 0.984 to 15.743 for length 16.
This introduces extra noise that hinders convergence.
(3) By scaling both series and DEAs in the loss functions, not only are the means of distance distributions kept roughly the same ($\approx$1.4), but also their variances are suppressed to a small value.
This helps SEAnet focus on learning from the differences between series distance and DEA distance, without being interfered by endogenous noises of the distance distributions.

\begin{table}[tb]
  \centering
  \setlength\tabcolsep{3.85pt}
  \caption{Mean and Variance of the distribution of pairwise distances between two ideal series.
  Scaling by $\sqrt{{256}/{m}}$ is to preserve SoS by scaling DEA itself. 
  Scaling by $\sqrt{{1}/{m}}$ is the case where we scale both series and DEA in loss functions.
  (Note that for length $256$, scaling by $\sqrt{{256}/{m}}$ does not change the original behavior.)}
  \label{tab:loss-scaling}
  \begin{tabular}{@{}c|cc|cc|cc@{}} 
    \toprule
    \multicolumn{1}{c}{Length} & \multicolumn{2}{c}{Before Scaling} & \multicolumn{2}{c}{$\times \sqrt{{256}/{m}}$} & \multicolumn{2}{c}{$\times \sqrt{{1}/{m}}$} \\ 
    \cmidrule(rl){2-3} \cmidrule(rl){4-5} \cmidrule(rl){6-7}
    \multicolumn{1}{c}{$m$} & Mean & \multicolumn{1}{c}{Var} & Mean & \multicolumn{1}{c}{Var} & Mean & Var \\
    \midrule
    256 & 22.605 & 0.999 & 22.605 & 0.999 & 1.4128 & 0.0039\\
    128 & 15.969 & 0.998 & 22.583 & 1.9961 & 1.4115 & 0.0078 \\
    96 & 13.820 & 0.997 & 22.569 & 2.6597 & 1.4105 & 0.0104 \\
    16 & 5.5692 & 0.984 & 22.277 & 15.743 & 1.3923 & 0.0615 \\
    8 & 3.8772 & 0.967 & 21.933 & 30.944 & 1.3708 & 0.1209 \\
    \bottomrule
  \end{tabular}
\end{table}

Finally, we observe that scaling series and DEA will not only keep the two distances to the same level, but will also largely stabilize the distance distributions.
Both effects are beneficial to SEAnet's learning and convergence.
Thus, by z-normalizing DEA, and scaling series and DEA in $L_{C}$ and $L_{R}$, SEAnet succeeds in providing high-quality DEAs by preserving SoS, and in converging fast to good optima (thanks to the stable latent variables and gradients).

\subsection{Sampling with SEAsam and SEAsamE}\label{sec:sampling}

The representativeness of the training set upper bounds the quality of the deep models.
To effectively train SEAnet on very large ($\geq$$1e8$) series collections, a good sampling strategy is essential for providing representative subsets.
This means that the sample should effectively cover the entire space of a given dataset, and we need to efficiently select this sample without having to perform expensive computations on the full dataset. 
For example, an effective but prohibitively expensive sampling strategy would be to sample from all leaves of an index built on the entire dataset, since such an index would cover the entire space, and each leaf would gather similar series.

To this end, we propose SEAsam (SEA-sampling), a novel data series sampling strategy based on the sortable data series representation, InvSAX~\cite{DBLP:journals/pvldb/KondylakisDZP18}. 
Recall that SAX first transforms the data series into $l$ real values (i.e., the mean values of $l$ segments of consecutive points of the series), and then quantizes these real values, representing them using discrete symbols (usually of cardinality 256)~\cite{c08-kdd-Shieh-isax}. 
The core observation is that every subsequent bit in a SAX word contains a decreasing amount of information about the location of its corresponding data point, and simply increases the degree of precision. %
Interleaving SAX's bits such that all significant bits across each SAX word precede all less significant bits presents a value array with descending significance, i.e., InvSAX.

The procedure to generate InvSAX is shown in Figure~\ref{fig:coconut}.
The most significant bits \{$1,1,0,0$\} across the original SAX words are moved to the first 4 bits of InvSAX, making its first and most significant value $6$ (shown in red/bold). 
The second most significant bits \{$1,0,1,1$\} are moved to InvSAX's 5-8 bits, making the second value $2$.
The last bits \{$0,1,1,0$\} are moved to InvSAX's 9-12 bits, making the last two values $2$ and $6$.
Thus, this series will be order by its InvSAX [$6,2,2,6$].

\begin{figure}[tb]
  \centering
  \advance\leftskip-0.05\linewidth
  \includegraphics[width=0.7\linewidth]{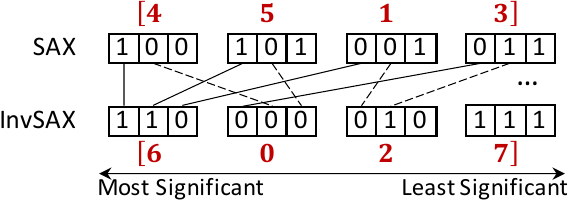}
  \caption{SEAsam transformation and InvSAX~\cite{DBLP:journals/pvldb/KondylakisDZP18}.}
  \label{fig:coconut}
\end{figure}

As illustrated in Figure~\ref{fig:seasam}, SEAsam orders the series collection by their InvSax representations, and draws samples at equal-intervals (e.g., every 1,000 series) from this sorted order.
Thus, SEAsam samples are expected to preserve the distribution of the series collection by evenly covering its InvSAX space.
Moreover, the time complexity of SEAsam is $\mathcal{O}(nm)$, and the space complexity of SEAsam is $\mathcal{O}(nl)$, rendering SEAsam an efficient strategy.

\begin{figure}[tb]
  \centering
  \advance\leftskip-0.05\linewidth
  \includegraphics[width=0.8\linewidth]{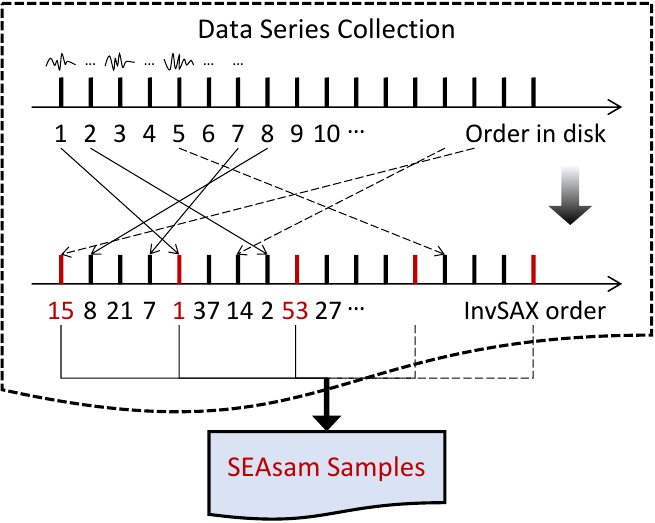}
  \caption{SEAsam sampling workflow.}
  \label{fig:seasam}
\end{figure}

Algorithm~\ref{algo:sampling} provides the pseudo-code of SEAsam sampling.
Line 1 generates SAXs for series collection $\mathcal{S}$.
Lines 3-8 interleave SAX bits to generate InvSAX.
This transformation could be implemented in place if under memory pressure, i.e., replacing SAX by its corresponding InvSAX.
Lines 9-10 sort InvSAX, select $n'$ equal-interval indices and return the corresponding sampled series.
After sampling from the data series collection, we randomly permute the samples to make training and validation sets during each epoch of autoencoder training.

\begin{algorithm}[tb]
{\small
  \caption{SEAsam Sampling}
  \label{algo:sampling}
  \begin{algorithmic}[1]
  \Require data series collection $\mathcal{S}$ of size $n$, sample size $n'$
  \Ensure sample set $\mathcal{S}'$
  \State $\hat{\mathcal{S}}$ = \Call{SAXTransform}{$\mathcal{S}$}
  \State Inv$\hat{\mathcal{S}} = \emptyset$
  \ForAll{$\hat{S} \in \hat{\mathcal{S}}$}
    \State Inv$\hat{S}$ = $0_{SAX}$
    \ForAll{$i \in$ \Call{range}{SAX cardinality}}
      \ForAll{SAX symbol $\hat{s} \in \hat{S}$}
        \State Inv$\hat{S}$.append(the $i$ bit of $\hat{s}$)
      \EndFor
    \EndFor
    \State Inv$\hat{\mathcal{S}}$.add(Inv$\hat{S}$)
  \EndFor
  \State indices = \Call{Sort}{Inv$\hat{\mathcal{S}}$}.indices[: : $n/n'$]
  \State $\mathcal{S}'=\mathcal{S}$[indices]
  \end{algorithmic}
  } %
\end{algorithm}

\noindent{\bf [Extending SEAsam with SEAsamE]}\label{sec:seasam3}
The empirical effectiveness of SEAsam sampling (cf. Section~\ref{fig:exp-sampling}) and the importance of the training set quality for our problem motivate the need to develop better and tailored sampling strategies. 
Designing sampling strategies demands a close investigation into the sampling space.
Although SEAsam sampling is designed to represent well the raw data series space, a key observation is that there are more spaces to be explored for DEA learning.
We first provide a discussion of the different sampling spaces in DEA learning.
On this basis, a novel combined sampling strategy, SEAsamE, is proposed in order to provide a combined use for all these spaces.

Our main observation is that there are three major sampling spaces in DEA learning, which could be organized into two categories.

The first category regards the dataset-specific sampling spaces, including the raw data series space targeted by SEAsam.
Another important sampling space falling into this category is the distance distribution of all possible series pairs in the collection, which is exactly what we would like SEAnet to preserve in the lower-dimensional DEA space.
From the perspective of deep model training, these two spaces correspond to the distributions of raw features and targeted values.
Preserving both distributions has been deemed beneficial in previous studies for capturing the ground truth and avoid bias in deep learning~\cite{c20-kdd-Wilson-domainadaptation}.

The second category regards the model-specific sampling spaces.
Intuitively speaking, these spaces represent model's response to the input features, i.e., the difficulty to embed and reconstruct a series.
In DEA learning, the reconstruction errors serve this purpose.
Unlike preserving the same distributions for the dataset-specific sampling spaces, different strategies facilitate model training following different heuristics~\cite{c17-iccv-wu-weighted}, 
e.g., sampling harder series is expected to offer faster convergence. %
In this work, we propose to draw samples evenly in terms of their reconstruction error values (instead of reconstruction error frequencies). 
That is, the sample set has a uniform distribution of reconstruction errors.
This sampling strategy %
is capable to correct the bias while controlling the gradient variance~\cite{c17-iccv-wu-weighted}.

In order to provide a training set that represents all three major sampling spaces mentioned above, we propose a novel composite sampling strategy SEAsamE. %
SEAsamE incorporates SEAsam to preserve the raw series space, with extra steps for the other two spaces.
An extra preprocessing step is introduced to collect an empirical distribution of the pairwise distances $\bar{P}(d(S_i,S_j))$.
To draw a set of training series, SEAsamE first draws a candidate set of SEAsam samples.
Second, reconstruction errors $L_{R}(\mathcal{S})$ of the candidate series are calculated.
The training set is drawn based on the inverse of the frequencies of reconstruction errors $1 / \bar{P}(L_{R}(S_i))$.
To draw series pairs for the calculation and backpropagation of $L_{C}(\mathcal{S}, \mathcal{S})$, distances of all series pairs within each mini-batch are calculated.
Then, SEAsamE selects series pairs according to their estimated distance probabilities $\bar{P}(d(S_i,S_j))$.

\begin{algorithm}[tb]
{\small
  \caption{Training with SEAsamE}
  \label{algo:seasam3}
  \begin{algorithmic}[1]
  \Require deep model with encoder $\phi$ and decoder $\psi$ parameterized by $\Theta$, data series collection $\mathcal{S}$, training size $n'$ and candidate size $n'_c$, batch size $b_s$ for series and $b_p$ for pairs, max epochs max$e$
  \Ensure learned parameters $\Theta$
  \State estimate $\bar{P}(d)$
  \ForAll{$i \in$ \Call{range}{max$e$}}
    \If{\Call{SampleCheck}{$i$,$\Theta$}}
      \State $\mathcal{S}'=$\Call{SampleSeries}{$\mathcal{S}$, $\Theta$, $n'$, $n'_c$}
    \EndIf
    \ForAll{mini-bach $\mathcal{S}'_b \in \mathcal{S}'$}
      \State $\mathit{E}'_b=\phi(\mathcal{S}'_b)$, $\mathcal{R}'_b=\psi(\mathit{E}'_b)$
      \State $\mathcal{S}'_p=$\Call{SamplePairs}{$\mathcal{S}'_b$, $\bar{P}(d)$, $b_p$}
      \State $L=L_C(\mathcal{S}'_p , \mathit{E}'_p)+\alpha L_R(\mathcal{S}'_b , \mathcal{R}'_b)$
      \State backpropagate $\nabla L$
    \EndFor
  \EndFor

  \vspace{-.15cm}
  \Statex
  \hrulefill
  \renewcommand{\algorithmicrequire}{\textbf{Function}}
  \Require \Call{SampleSeries}{$\mathcal{S}$, $\Theta$, $n'$, $n'_c$}
  \State $\mathcal{S}'_c=$\Call{SEAsam}{$\mathcal{S}$, $n'_c$}
  \State $\mathcal{R}'_c=\psi\cdot \phi(\mathcal{S}'_c)$
  \State estimate $\bar{P}(L_R)$ according to $L_R(\mathcal{S}'_c , \mathcal{R}'_c)$
  \State draw $\mathcal{S}'$ of $n'$ series $\sim \bar{P}(L_R)$ \\
  \Return $\mathcal{S}'$
  
  \vspace{-.1cm}
  \Statex
  \hrulefill
  \Require \Call{SamplePairs}{$\mathcal{S}'_b$, $\bar{P}(d)$, $b_p$}
  \State calculate $d(\mathcal{S}'_b, \mathcal{S}'_b)$
  \State draw $\mathcal{S}'_p=\{(\mathcal{S}'_i, \mathcal{S}'_j)\}$ of $b_p$ pairs $\sim \bar{P}(d)$ \\
  \Return $\mathcal{S}'_p$
  \end{algorithmic}
} %
\end{algorithm}

Algorithm~\ref{algo:seasam3} shows how to train SEAnet with SEAsamE.
Line 1 is a preprocessing step to estimate $\bar{P}(d)$.
(This step can be implemented by calculating histograms based on a sufficient number of random samples.)
At the beginning of each epoch, line 3 checks whether to invoke function \textsc{SampleSeries}, and then line 4 draws a new series sample set.
\textsc{SampleSeries} draws $n'_c$ SEAsam samples $\mathcal{S}'_c$ in line 10, calculates their reconstruction errors in line 11-12, and returns $n'$ samples from $\mathcal{S}'_c$ according to $\bar{P}(L_R)$.
Lines 5-6 consider each mini-batch and calculate their DEAs and reconstructions.
To draw $b_p$ series pairs, line 7 invokes function \textsc{SamplePairs}, which calculates distances for all series pairs in line 15, and returns $b_p$ sampled pairs according to $\bar{P}(d)$.
Last, lines 8-9 calculate and backpropagate the loss $L$.

\section{Experimental Evaluation}\label{sec:experiments}

We present our experimental evaluation of SEAnet, DEA-based data series similarity search, SEAsam and SEAsamE using 7 diverse synthetic and real datasets. 
In summary, the results demonstrate that the SEAnet DEA is robust across various dataset properties and outperforms its competitors by better preserving original pairwise distances and nearest neighborhood structure, leading to better approximate similarity search results than traditional (PAA-based) and alternative deep learning (DEA-based using FDJNet, TimeNet, and InceptionTime) approaches.

\noindent{\bf [Setup]}\label{sec:exp-setup}
All deep models were trained using Nvidia Tesla V100 SXM2 (16G memory).
Sampling, indexing and query answering were conducted in a server with 2x Intel(R) Xeon(R) Gold 6134 CPU @ 3.20GHz and 320GB RAM. 
Software environments were python/3.6.10, pytorch-gpu/py3/1.5.1 and cuda/10.2.

\noindent{\bf [Datasets]}\label{sec:exp-dataset}
Experiments were conducted on 3 synthetic datasets of different characteristics and 4 real datasets from diverse domains.
For synthetic datasets, we used RandWalk, F5 and F10. 
RandWalk~\cite{c18-vldb-Echihabi-lernaean,c19-vldb-Echihabi-return} was generated as cumulative sums of steps following a standard Gaussian distribution $N(0,1)$. 
F5 and F10 were recently introduced to evaluate iSAX on datasets of different frequencies~\cite{j20-kais-Levchenko-bestneighbor}.
They were generated through Inverse Discrete Fourier Transform (IDFT) from a random spectrum with its first 5 or 10 components being amplified.
The $4$ real datasets are Seismic from seismology, Astro from astronomy, SALD from neuroscience and Deep1B from image processing~\cite{c19-vldb-Echihabi-return}.
Length of each series is 128 for SALD, 96 for Deep1B, and 256 for the rest.
Note that these datasets are considered hard for similarity search~\cite{c18-vldb-Echihabi-lernaean,c19-vldb-Echihabi-return}.
We experimented with dataset sizes between 1M to 100M series
(100M series of length 256 $\approx$100GB).

\noindent{\bf [Methods]}\label{sec:exp-methods}
We evaluated the SEAnet-generated DEA and its applications in data series similarity search against PAA and DEA generated by SEAnet-nD (a simplified version of SEAnet), and our adaptations of FDJNet~\cite{c19-neurips-Franceschi-embedding}, TimeNet~\cite{DBLP:conf/esann/MalhotraTVAS17}, and InceptionTime~\cite{DBLP:journals/datamine/FawazLFPSWWIMP20}.
We describe these methods below.
SEAsam was compared to uniformly random sampling. 

PAA %
is the baseline method to evaluate DEA's summarization quality.
PAA-based MESSI iSAX index~\cite{messijournal} is the SOTA baseline method to evaluate DEA-based iSAX on approximate similarity search.
When it is clear from the context, we used the term PAA to denote both the PAA summarization, and the PAA-based iSAX index in the rest of Section~\ref{sec:experiments}.
We used the same convention for other methods, as well, e.g., SEAnet denotes the DEA generated by SEAnet and the index built on the DEA generated by SEAnet.

SEAnet-nD is an encoder-only version of SEAnet, introduced to evaluate the contribution of the decoder to the final performance.
The convolution kernel size for SEAnet and SEAnet-nD is 3.
For FDJNet, we adopted the same network setting as SEAnet-nD.
For TimeNet, we used the output of the last position as DEA, instead of the original concatenation of latent vectors.
This enables TimeNet using longer latent vectors to generate lower-dimensional DEA.
The dropout probability of TimeNet is 0.4, following the original setting~\cite{DBLP:conf/esann/MalhotraTVAS17}.
For InceptionTime, we used the same structure as SEAnet, but replaced ResBlock with InceptionBlock~\cite{DBLP:journals/datamine/FawazLFPSWWIMP20}.
The convolution kernel sizes for InceptionTime are \{3, 5, 9, 17\}.
Batch size was set to 128 for TimeNet (due to our memory limit), and to 256 for the other architectures.
For all architectures, we stacked 7, 6, and 5 building blocks for series of length 256, 128, and 96, respectively.
Latent dimensions and channels were the same to series length. %

All models were trained using SGD and the same loss function (cf. Section~\ref{sec:training}).
Training size was 200,000 series, and validation size was 20,000.
TimeNet was trained for 125 epochs, while the others for 100 epochs.
Hyperparameters were tuned for each model (of specific DEA length) on 100M datasets.
The best hyperparameters for similarity search were adopted for all other dataset sizes.
$\alpha$ was searched from \{$0.1$, $0.25$, $0.5$, $1$, $1.25$\}.
Learning rate was cross searched from \{$1e\text{-}3$, $5e\text{-}3$, $1e\text{-}2$, $2.5e\text{-}2$, $5e\text{-}2$\}, and was either linearly decayed (every epoch), or exponentially decayed (by $0.9$ every $2$ epochs) until $1e\text{-}5$.
Totally, 6,090 deep models were trained to provide a thorough profile of DEA architectures.
Other hyperparameters were set to their default values.
For indexing, leaf size $h$ was 8,000 by default.

SEAnet encoder architecture details are reported in Table~\ref{tab:model-specifics}.
We use the respective symmetrical architectures as their decoders.
The complexities were measured in number of (millions of) tunable parameters, and (billions of) Multiply-and-Accumulate (MAC) operations.
All models converged with the training speed reported in Figure~\ref{fig:exp-convergence}.

\begin{table}[tb]
  \centering
  {\small
  \caption{%
    SEAnet encoder architecture specifics for different series lengths. Building blocks ([kernel size, channel number] for the convolutional layer/block, [dimension] for the linear layer) are shown in brackets, multiplied by the numbers of blocks stacked.
  }
  \label{tab:model-specifics}
  \begin{tabular}{@{}c|lll@{}} 
      \toprule
    layer name & 13-layer (\emph{96}) & 15-layer (\emph{128}) & 17-layer (\emph{256}) \\
    \midrule
    ConvLayer & $\begin{bmatrix} 3, 96 \end{bmatrix}$ & $\begin{bmatrix} 3, 128 \end{bmatrix}$ & $\begin{bmatrix} 3, 256 \end{bmatrix}$ \\ [3pt]
    ConvBlock & $\begin{bmatrix} 3, 96\\ 3, 96 \end{bmatrix}\times$5 & $\begin{bmatrix} 3, 128\\ 3, 128 \end{bmatrix}\times$6 & $\begin{bmatrix} 3, 256\\ 3, 256 \end{bmatrix}\times$7 \\ [7.5pt]
    Linear & $\begin{bmatrix} 96\\ 16 \end{bmatrix}$ & $\begin{bmatrix} 128\\ 16 \end{bmatrix}$ & $\begin{bmatrix} 256\\ 16 \end{bmatrix}$ \\
    \midrule
    parameters/M & 0.585 & 1.234 & 5.712 \\
    MACs/B & 0.053 & 0.152 & 1.412 \\
    \bottomrule
  \end{tabular}
  } %
\end{table}

\noindent{\bf [Measures]}\label{sec:exp-measures}
To evaluate summarization quality, we used three measures: average distance differences (unscaled $L_{C}$), reconstruction RMS, and NN coverage.
Series subsets, or series pairs are SEAsam samples from 100M datasets.

\begin{enumerate}[noitemsep,topsep=0pt,wide, labelwidth=!, labelindent=0pt]
  \item \textbf{Average Distance Differences} (\textbf{$L_{C}$}). 
  Differences between original distances and DEA distances of series pairs, i.e., $\vert d'(\phi(S_i), \phi(S_j)) - d(S_i, S_j) \vert$.
  Reported values were averaged from 20,000 pairs.
  (Differences of 1,000 pairs were illustrated as scatters in Figure~\ref{fig:exp-scatter}.)

  \item \textbf{Reconstruction RMS}. 
  Root-Mean-Square errors between original series and their reconstructions, i.e., $\sqrt{\frac{1}{m}\sum_{i}(p_i-p'_i)^2}$, where $S'=[...,p'_i,...]$ is the reconstruction of series $S=[...,p_i,...]$.
  Reported values were averaged from 20,000 series.

  \item \textbf{NN Coverage}.
  The coverage of series $S$'s nearest neighbors in DEA space, i.e., $\frac{\vert k\mathrm{NN}_{d}(S) \cap k\mathrm{NN}_{d'}(E)\vert}{\vert k\mathrm{NN}_{d}(S)\vert}$, where $k\mathrm{NN}_{d}$ and $k\mathrm{NN}_{d'}$ return $k$ nearest neighbors in original and DEA spaces respectively.
  We consider NN coverage as a direct measure of whether the structure of original distance spaces is preserved or not.
  We report NN coverage for $k \in$ \{1, 5, 10, 50, 100, 500, 1,000\} in Section~\ref{sec:exp-dea}.
  The reported values were averaged from 1,000 series, whose kNN was searched from 20,000 series.

  \item \textbf{1st BSF Tightness}. 
  To evaluate the DEA performance on data series similarity search, we used the tightness of the first Best-So-Far (1st BSF)~\cite{c19-vldb-Echihabi-return}.
  In the context of approximate similarity search, 1st BSF is the best result under the constraint of a fixed number of leaf nodes, or series allowed to be examined by the query answering algorithm. 
  In the case where only one leaf node is allowed to be examined, the 1st BSF is also called the \emph{approximate answer}.
  In our experiments (Sections~\ref{sec:exp-design} and~\ref{sec:exp-search}), we report the 1st BSF tightness as a function of the number of series examined (this makes for a fair comparison across indices with leaves containing different number of series).
  Similar to previous work~\cite{c18-vldb-Echihabi-lernaean,c19-vldb-Echihabi-return}, we report the average tightness over 1,000 queries.
\end{enumerate}

\subsection{SoS Preservation and SEAsam}\label{sec:exp-design}

In this section, we evaluate the two novel design choices we propose for DEA methods, i.e., the SoS preservation principle and SEAsam.

\noindent{\bf [SoS Preservation]}\label{sec:exp-sos}
First, we evaluate the effect of the scaling steps introduced by the SoS preservation. 
We trained all five models using SEAsam samples across seven 100M-size datasets and reported their 1st BSF tightness improvements.
1st BSFs were reported under the constraint that the query answering algorithm examines a maximum of 10,000 series in the index before producing the answer.
Improvements are calculated by subtracting the 1st BSF tightness of the non-scaled models from that of the scaled models.

The results show %
that the scaled models provided better 1st BSF in 32 out of the 35 experiments (91\%).
The only three exceptions were by small margins.
Besides, 14 of 35 (40\%) non-scaled models could not effectively converge.
They either converged to bad local optima of large constant DEAs (cf. Section~\ref{sec:ss}), or did not converge and generate random DEAs (exhibiting similar statistics to %
Table~\ref{tab:loss-scaling}).

These results verify that the proposed SoS preservation is indeed an effective method for both preserving pairwise distances and facilitating network convergence.
We note that the SoS preservation idea is applicable to any suitable architecture, and apart from SEAnet, it also improves the performance of the non-scaled versions of FDJNet, TimeNet and InceptionTime.
In the rest of this section, we only report results using the scaled models.

\begin{figure}
  \centering
  \includegraphics[width=0.95\linewidth]{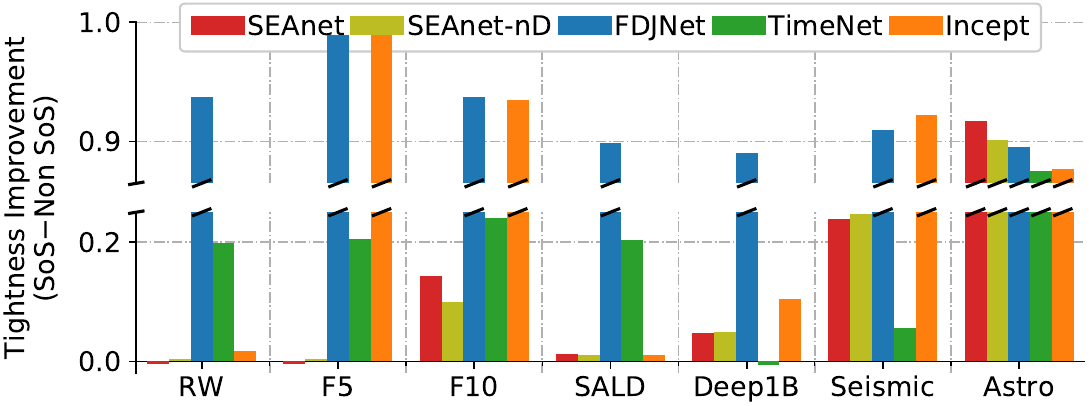}
  \caption{Improvements of employing SoS preservation in terms of 1st BSF tightness (positive values mean SoS is better); 100M datasets.}
  \label{fig:exp-scaling}
\end{figure}

\noindent{\bf [SEAsam]}\label{sec:exp-seasam}
Second, we compare the proposed SEAsam against the commonly used uniform random sampling. %
Results are reported similarly to the previous experiments.
Improvements were calculated by subtracting the 1st BSF tightness of models trained using random samples from those trained using SEAsam samples.

For 27 out of the 35 experiments (77\%), SEAsam provided tighter 1st BSFs than random sampling. %
SEAsam was only surpassed by random sampling on 8 experiments (23\%) with a very small margin. %
We observe that for SEAnet, %
SEAsam was always better. 

\begin{figure}
  \centering
  \includegraphics[width=0.95\linewidth]{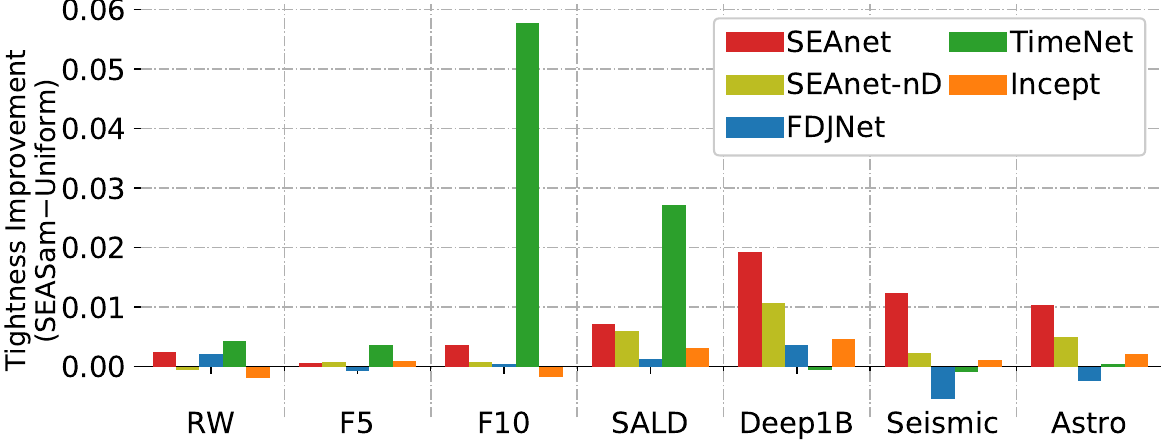}
  \caption{Improvements of SEAsam over uniformly random sampling in terms of 1st BSF rightness
  (positive values mean SEAsam is better); 100M datasets.}
  \label{fig:exp-sampling}
\end{figure}

In order to evaluate the effectiveness of SEAsam, we also measured the number of distinct leaves (of an index constructed on the entire dataset) containing a series that is part of the SEAsam sample in Figure~\ref{fig:exp-sampling-coverages}.
Intuitively, the leaf nodes of the index represent an effective split of the series space, which corresponds to the underlying distribution of the collection.
The more leaf nodes a sample set covers, the better it represents the entire collection.
In our experiments, for all samples with sizes between 10K-500K series across our 7 datasets, SEAsam samples covered more leaf nodes than uniformly random samples with an average improvement of 8\%, and up to 28\% %
for the challenging F10 dataset.

\begin{figure*}[tb]
  \centering
  \subfloat{\includegraphics[width=\textwidth]{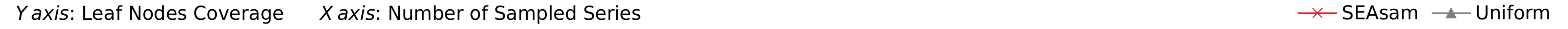}}\\[-2ex]
  \setcounter{subfigure}{0}
  \subfloat[RandWalk]{
    \includegraphics[width=.137\textwidth]{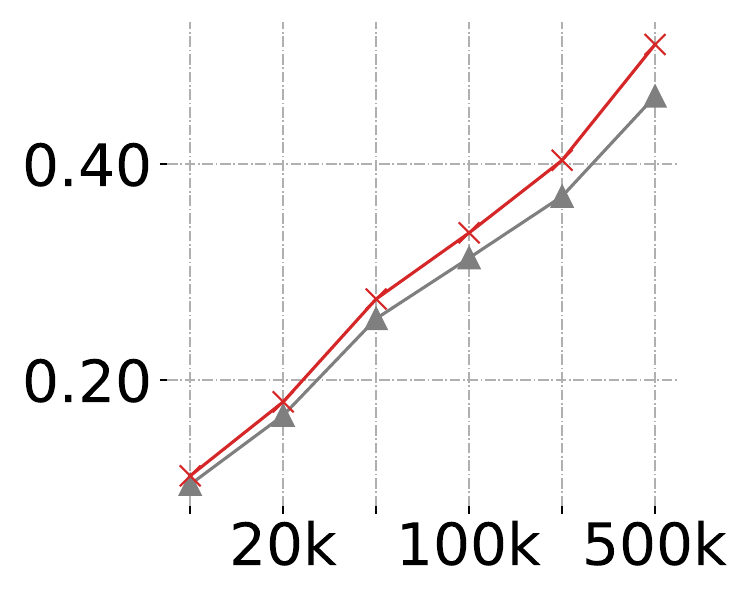}}
  \subfloat[F5]{
    \includegraphics[width=.137\textwidth]{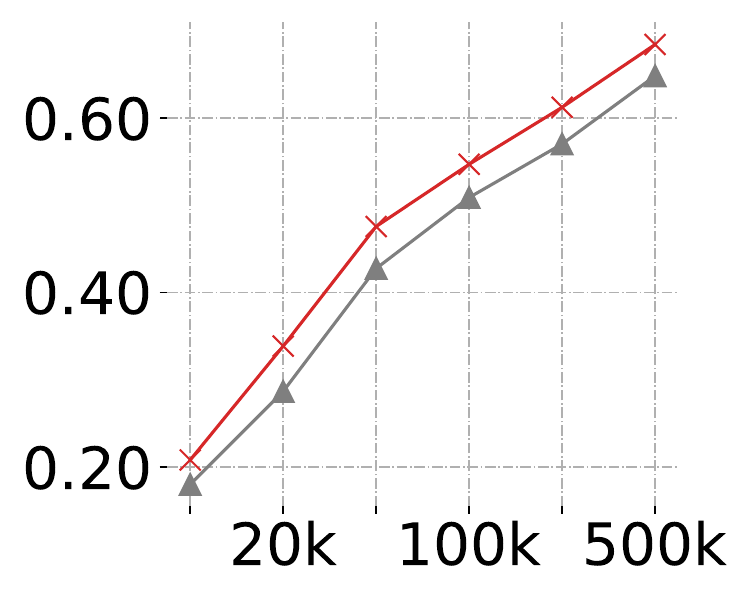}}
  \subfloat[F10]{
    \includegraphics[width=.137\textwidth]{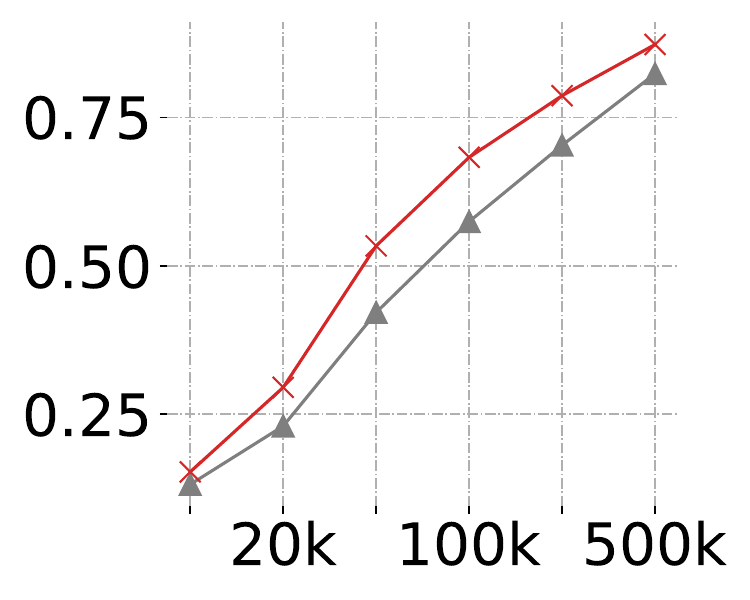}}
  \subfloat[SALD]{
    \includegraphics[width=.137\textwidth]{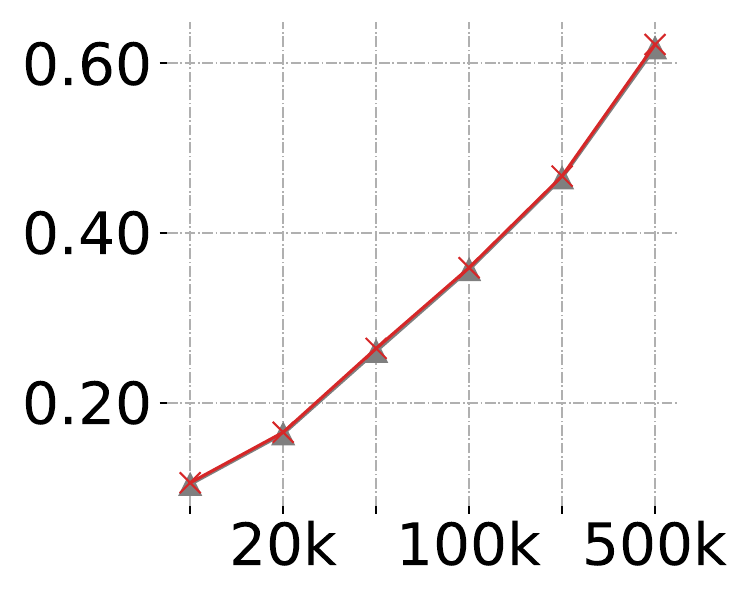}}
  \subfloat[Deep1B]{
    \includegraphics[width=.137\textwidth]{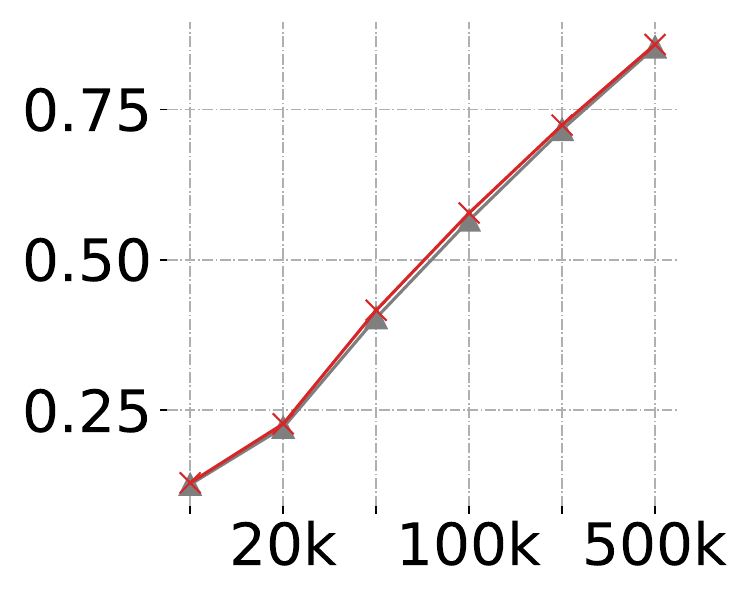}}
  \subfloat[Seismic]{
    \includegraphics[width=.137\textwidth]{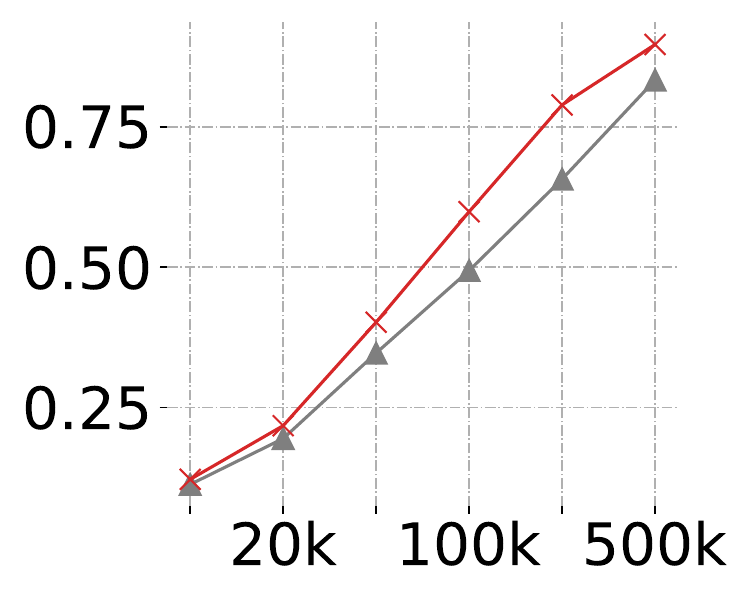}}
  \subfloat[Astro]{
    \includegraphics[width=.137\textwidth]{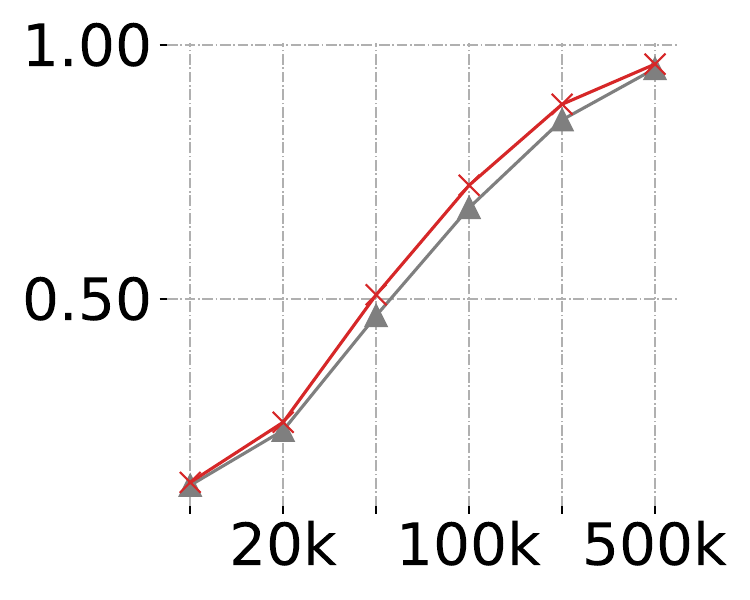}}
  \caption{Leaf node coverage across different sampling sizes for uniformly random sampling (light gray) and SEAsam (dark red).}
  \label{fig:exp-sampling-coverages}
\end{figure*}

These results verify that SEAsam provides more representative samples than uniformly random sampling.
In the following, we report results with SEAsam.

\subsection{DEA Quality}\label{sec:exp-dea}

\begin{table*}[tb]
  \centering
  \caption{(a) Averaged distance differences between pairs of series in the original and the embedded (PAA, DEA) spaces. 
  (b) Root-Mean-Square error between original and reconstructed series. In both cases, best result (lower is better) is marked in bold, second best is underlined. 
  (Results calculated using 10,000 series SEAsam sampled from datasets of 10-million series.)}
  \vspace{-0.5\baselineskip}
  \label{tab:distances}
  {\small
  \setlength\tabcolsep{4pt}
  \begin{tabular}{@{}c|cccccc|cccc@{}} 
    \multicolumn{1}{c}{ } & \multicolumn{6}{c}{$($a$)$ Averaged Distance Differences} & \multicolumn{4}{c}{$($b$)$ Reconstruction RMS Error} \\ 
    \cmidrule(rl){2-7} \cmidrule(rl){8-11}
    \multicolumn{1}{c}{Dataset} & PAA & FDJNet & TimeNet & \multicolumn{1}{c}{Incept} & SEAnet-nD & SEAnet & PAA & TimeNet & Incept & SEAnet \\
    \midrule
    RandWalk & 1.3701 & \underline{0.2794} & 0.4098 & 0.6285 & 0.2976 & \textbf{0.2194} 
    & \underline{0.3061} & 0.3354 & 0.3587 & \textbf{0.2604} \\
    F5 & 2.1152 & 0.1737 & 0.2103 & 0.2836 & \underline{0.1692} & \textbf{0.1629}
    & 0.4214 & \underline{0.2527} & 0.2708 & \textbf{0.2433} \\
    F10 & 5.0395 & 1.1943 & 1.9063 & 1.3958 & \underline{1.1859} & \textbf{1.1672} 
    & 0.6238 & 0.6799 & \underline{0.5041} & \textbf{0.2635} \\
    \midrule
    SALD & 3.2927 & 0.6247 & 0.6928 & 0.858 & \textbf{0.5748} & \underline{0.6182} 
    & \underline{0.5586} & 0.5883 & 0.6831 & \textbf{0.5023} \\
    Deep1B & 8.1095 & 0.9511 & 7.8478 & 0.9511 & \textbf{0.9083} & \underline{0.9484} 
    & 0.9207 & 1.0 & \underline{0.6368} & \textbf{0.5418} \\
    Seismic & 9.9629 & \underline{1.3798} & 1.6577 & 1.4555 & \textbf{1.306} & 1.4514 
    & \textbf{0.7385} & 0.8211 & 0.9669 & \underline{0.7771} \\
    Astro & 14.622 & 1.9239 & 2.4981 & \textbf{1.7983} & \underline{1.8991} & 1.9737 
    & \textbf{0.9267} & \underline{1.0} & 1.4096 & 1.1196 \\  
  \end{tabular}
  } %
\end{table*}

\noindent{\bf [Average Distance Differences]}\label{sec:exp-difference}
The averaged distance differences reported in Table~\ref{tab:distances}a, show that SEAnet and SEAnet-nD outperformed PAA in all 7 datasets.
SEAnet and SEAnet-nD also outperformed all other architectures in 6 out of the 7 datasets. %
The averaged distance differences for SEAnet were better than SEAnet-nD for the 3 synthetic datasets, but worse for the 4 real datasets.
This is because of the regularization effect of the decoder.
Synthetic datasets are less noisy, making models prone to overfitting to the training sets.
In this case, the decoder's regularization improves the averaged distance differences.
On the other hand, real datasets are more noisy, where loss $L_R$ dominates $L_C$, making SEAnet worse than SEAnet-nD in terms of averaged distance differences.
However, as we will explain later on, %
SEAnet still outperformed SEAnet-nD in terms of NN coverages and 1st BSF tightness.

Regarding the other models, TimeNet worked better than FDJNet and InceptionTime only for the F5 dataset, which is of moderate periodicity, and lagged behind for Deep1B, whose adjacent values are less correlated.
InceptionTime's high performance on Astro is an interesting result. 
However, after examining the embedding and reconstructed series of InceptionTime on Astro, we infer this is due to overfitting (reconstruction RMS, NN coverage, 1st BSF tightness and other results concur to this explanation).

\noindent{\bf [Distance Scatter]}\label{sec:exp-scatter}
Distance differences are depicted in scatter plots in Figure~\ref{fig:exp-scatter}.
Points close to the $y=x$ diagonal correspond to series for which the original distances are well preserved in the DEA space.
We observe that scatters of DEAs generated by SEAnet assembled tighter than PAA around the diagonal for all 7 datasets.
Moreover, scatters of DEAs exhibited stronger linearity; %
thus, SEAnet preserved the true nearest neighborhoods better than PAA.

\begin{figure*}[tb]
  \centering
  \subfloat{\includegraphics[width=\textwidth]{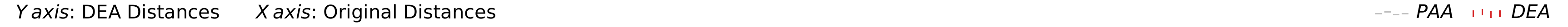}}\\[-2ex]
  \setcounter{subfigure}{0}
  \subfloat[RandWalk]{
    \includegraphics[width=.137\textwidth]{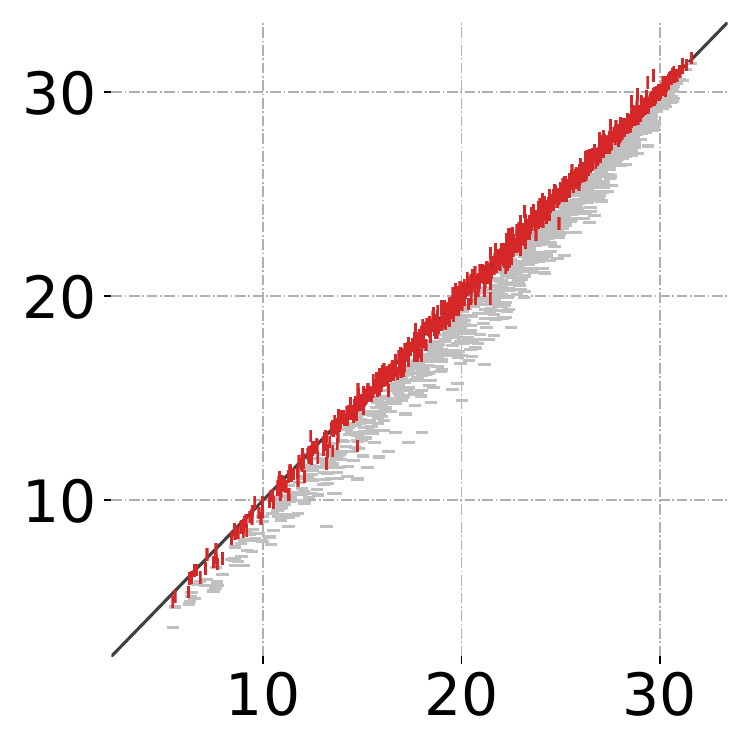}}
  \subfloat[F5]{
    \includegraphics[width=.137\textwidth]{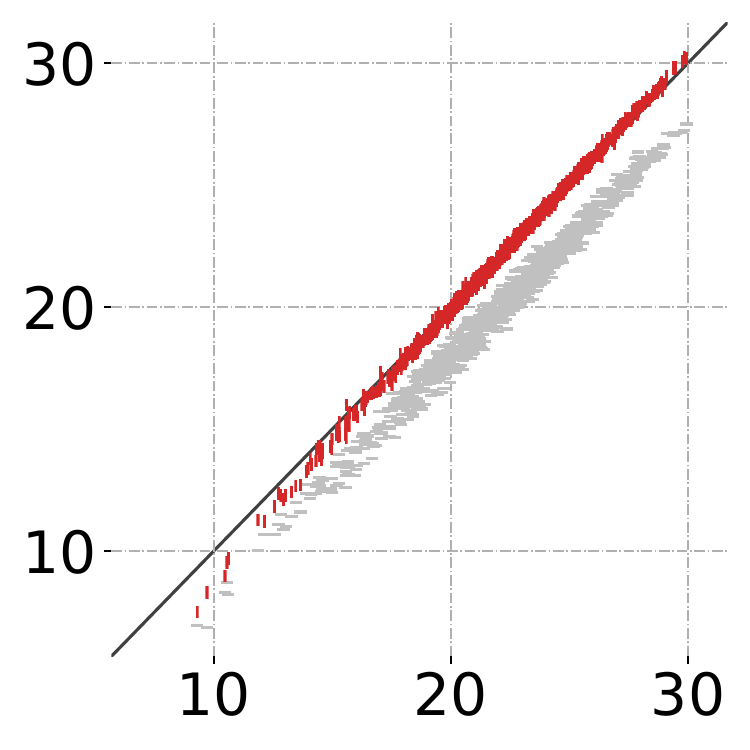}}
  \subfloat[F10]{
    \includegraphics[width=.137\textwidth]{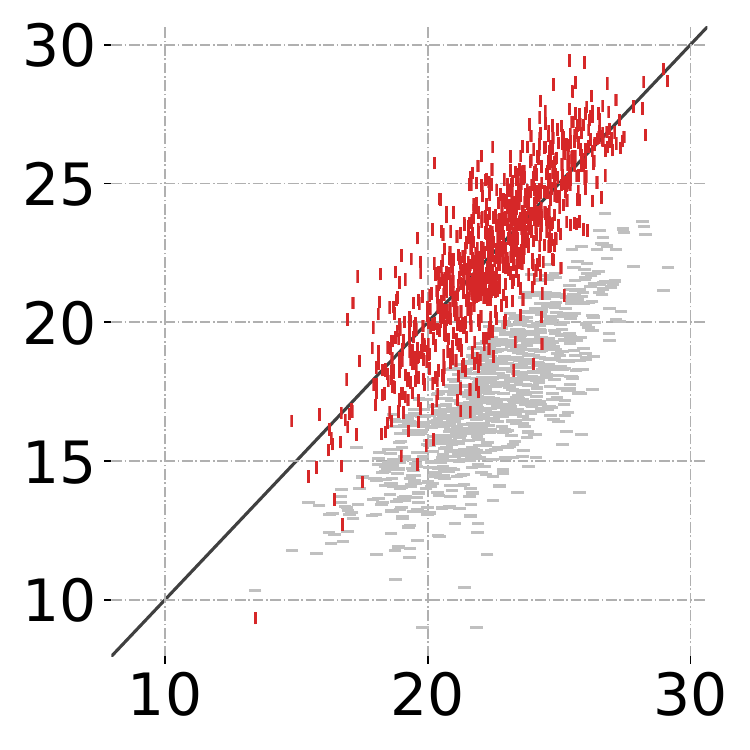}}
  \subfloat[SALD]{
    \includegraphics[width=.137\textwidth]{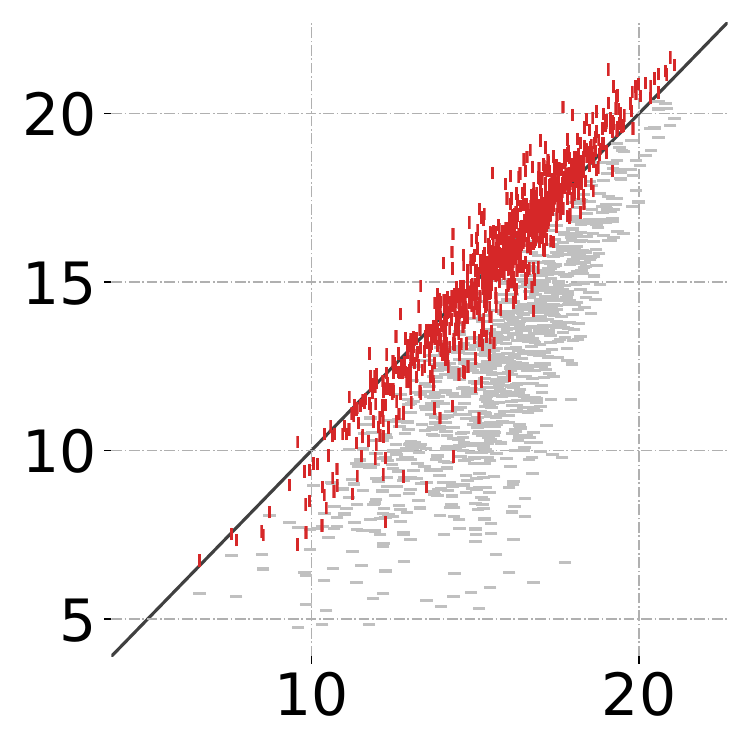}}
  \subfloat[Deep1B]{
    \includegraphics[width=.137\textwidth]{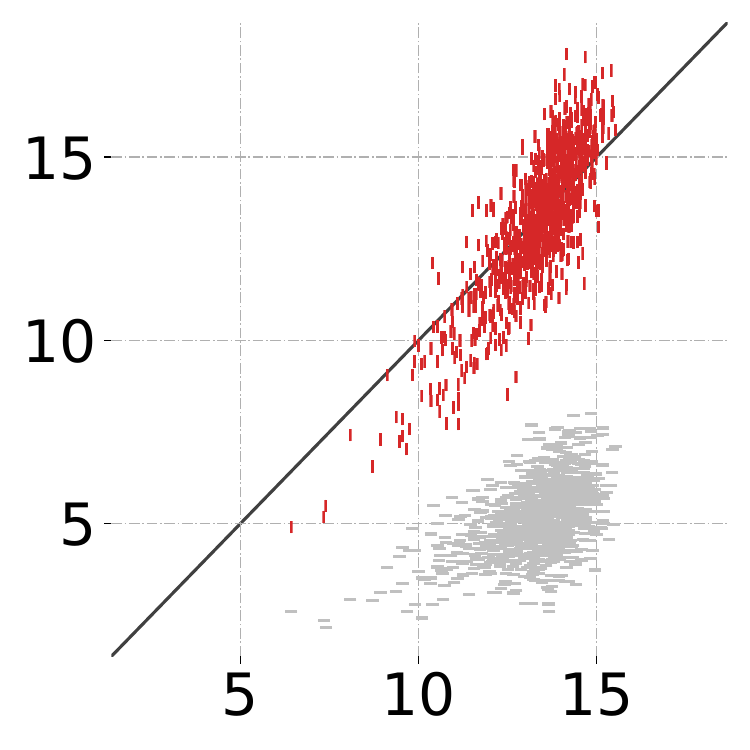}}
  \subfloat[Seismic]{
    \includegraphics[width=.137\textwidth]{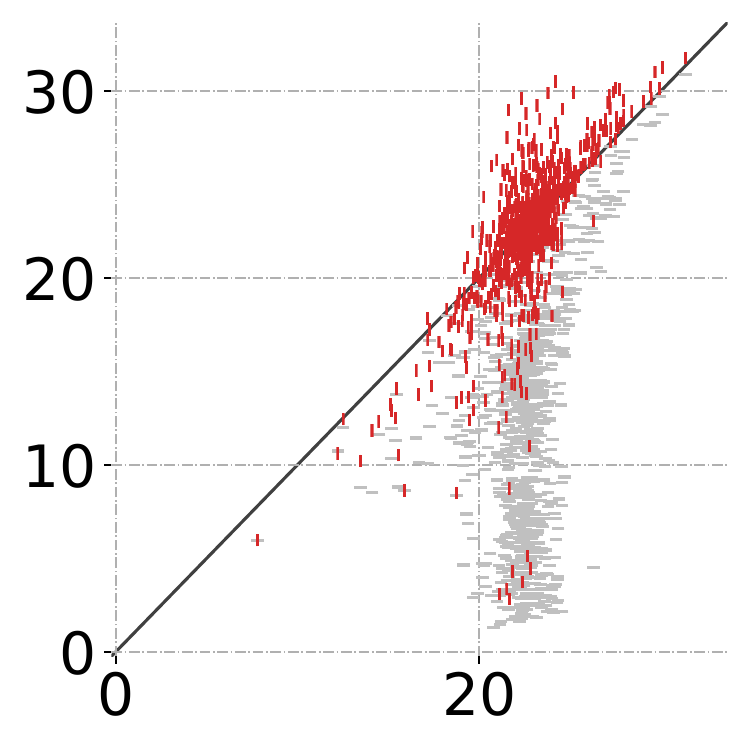}}
  \subfloat[Astro]{
    \includegraphics[width=.137\textwidth]{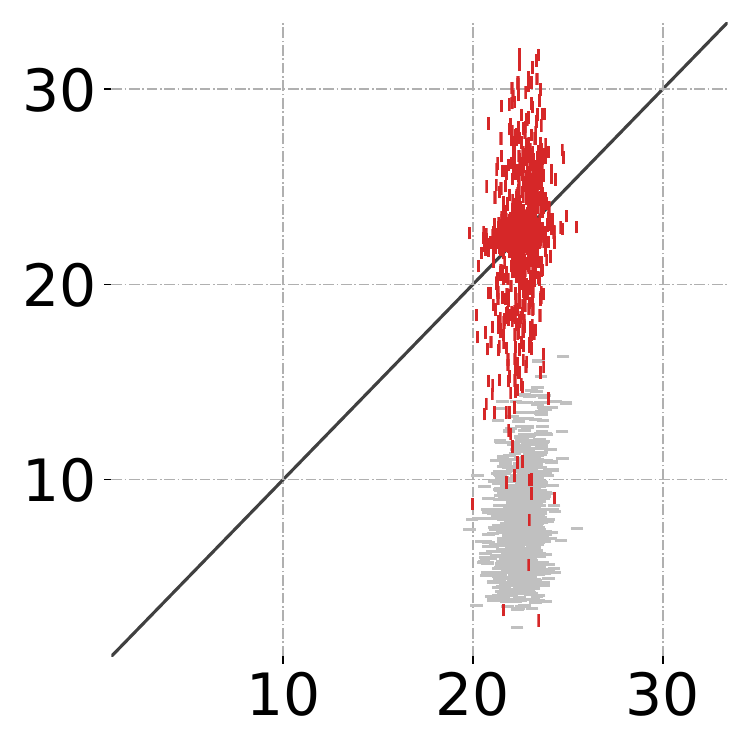}}
  \caption{Distance scatters of 1,000 SEAsam sampled series pairs across different datasets for PAA (light gray) and SEAnet (dark red).}
  \label{fig:exp-scatter}
\end{figure*}

\noindent{\bf [Reconstruction RMS]}\label{sec:exp-rms}
The reconstruction RMS results are reported in Table~\ref{tab:distances}b. 
SEAnet surpassed competitors on 5 out of 7 datasets; it lost to PAA for Seismic and Astro.
Upon close examination, we observe that this happened because of some hard to summarize series, for which neither SEAnet, nor PAA succeeded to produce a good summarization. 

Given that the data series are z-normalized, an RMS $\approx$1 might imply useless local optima reached by setting all reconstruction values to zeroes.
This is exactly the case for TimeNet on Deep1B/Astro.
Such decoders cannot contribute at all to better summarizations.
This is even worse than getting higher RMS,
where the network might still learn from data.
Such observations prove that SEAnet is more robust than PAA, FDJNet, TimeNet and InceptionTime for datasets of different characteristics.

\noindent{\bf [NN Coverages]}\label{sec:exp-nn}
The NN coverages are reported in Figure~\ref{fig:exp-knn}.
SEAnet outperformed PAA and other architectures on all 63 experiments.
This observation confirms SEAnet's capabilities on well-preserving original distance space structures in the DEA space.
The advantage of deep models over PAA is not as obvious as in Table~\ref{tab:distances}a, except for SEAnet, indicating that the target of preserving original distances in DEA distances alone cannot guarantee to provide high-quality DEAs for similarity search.
This observation, together with the fact that SEAnet outperformed SEAnet-nD, confirms the need for the decoder.

\begin{figure*}[tb]
  \centering
  \subfloat{\includegraphics[width=\textwidth]{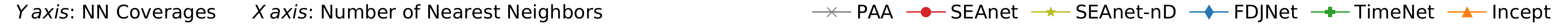}}\\[-2ex]
  \setcounter{subfigure}{0}
  \subfloat[RandWalk]{
    \includegraphics[width=.1385\textwidth]{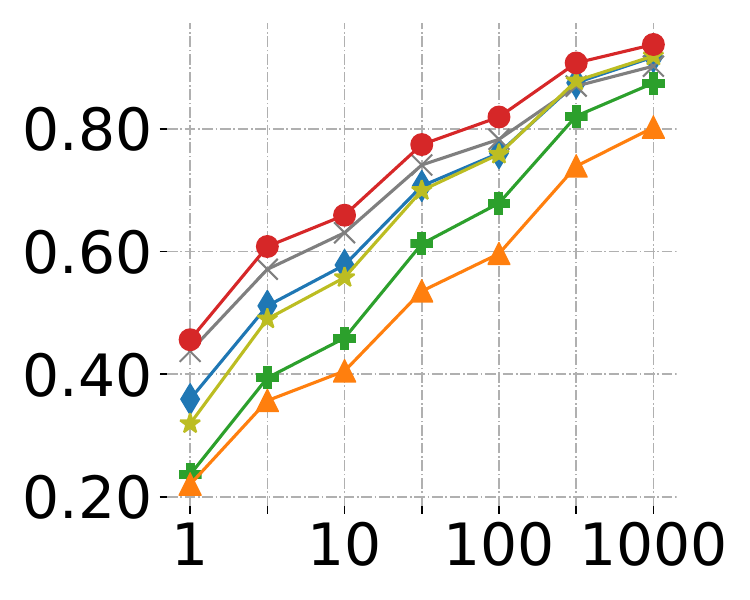}}
  \subfloat[F5]{
    \includegraphics[width=.1385\textwidth]{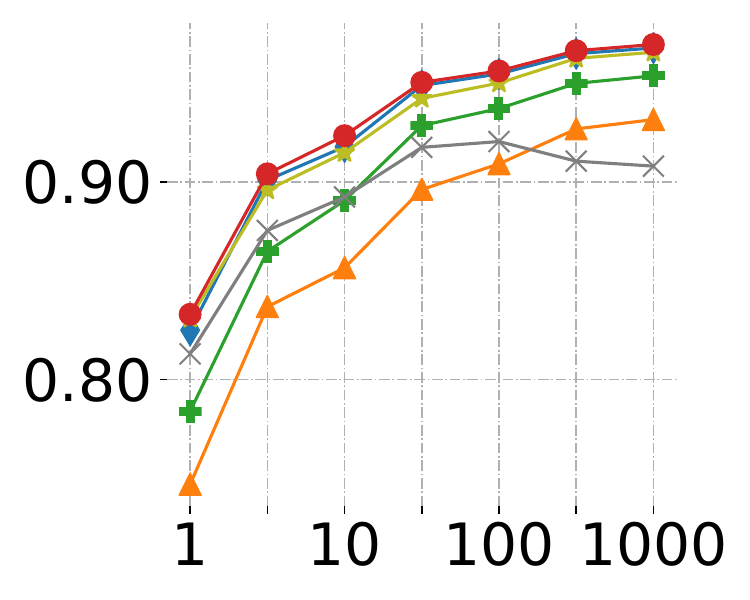}}
  \subfloat[F10]{
    \includegraphics[width=.1385\textwidth]{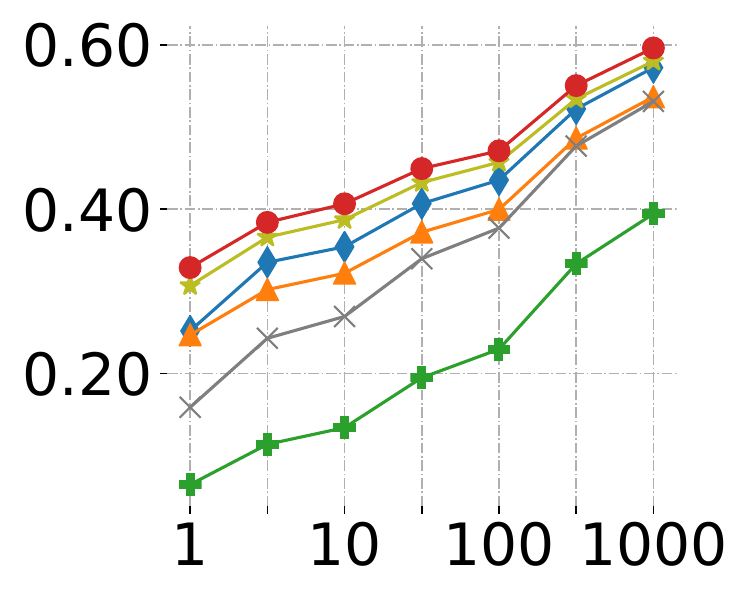}}
  \subfloat[SALD]{
    \includegraphics[width=.1385\textwidth]{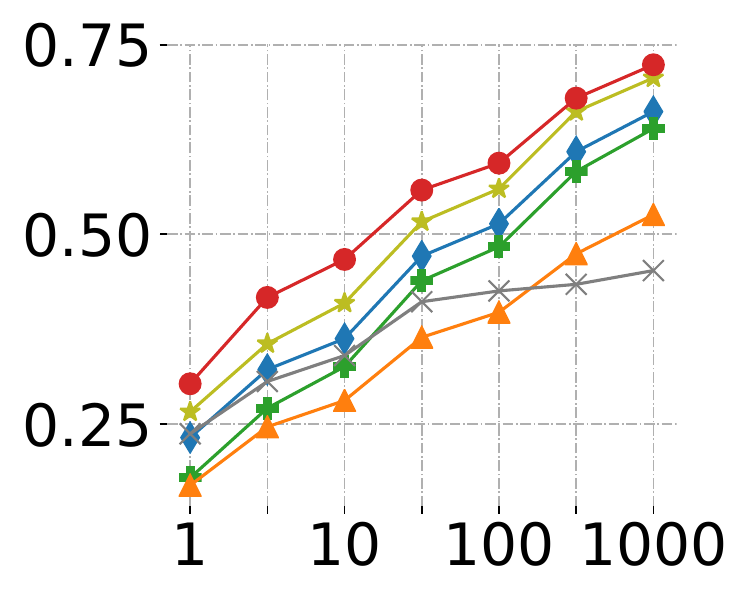}}
  \subfloat[Deep1B]{
    \includegraphics[width=.1385\textwidth]{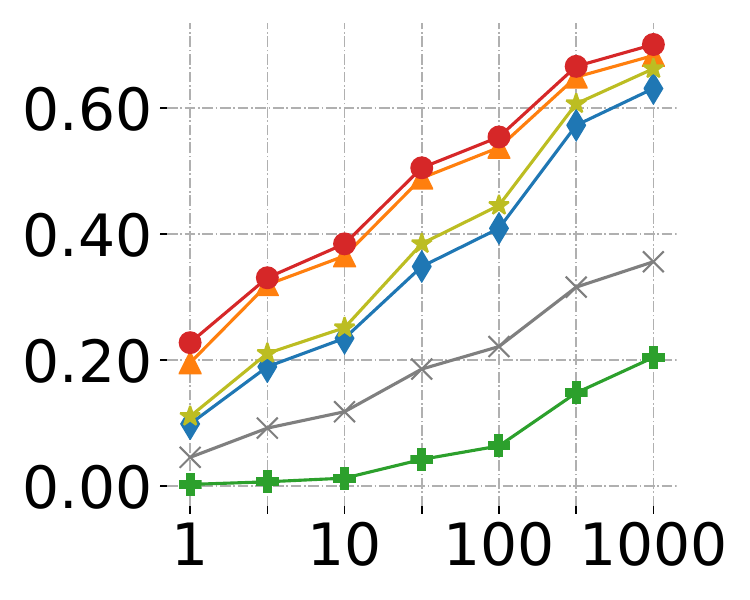}}
  \subfloat[Seismic]{
    \includegraphics[width=.1385\textwidth]{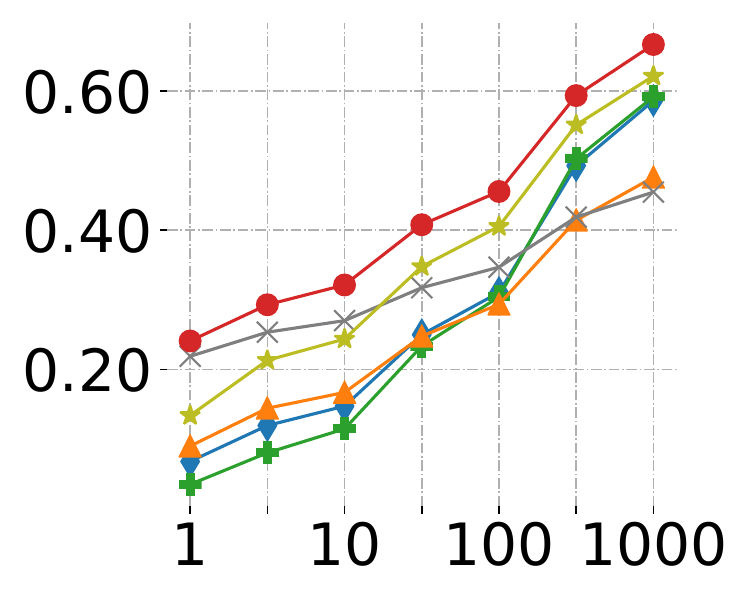}}
  \subfloat[Astro]{
    \includegraphics[width=.1385\textwidth]{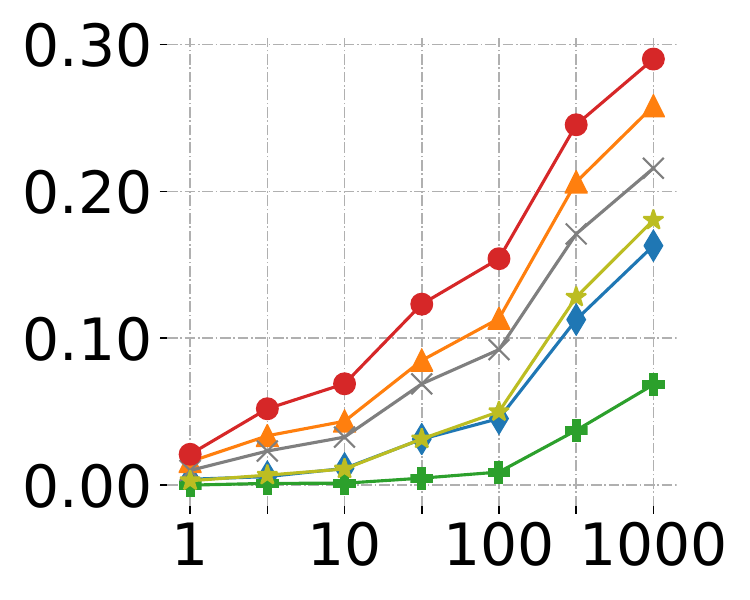}}
  \caption{Nearest neighbors' coverage vs neighborhood size (higher is better).}
  \label{fig:exp-knn}
\end{figure*}

SEAnet-nD outperformed FDJNet on 56 out of the 63 experiments (89\%).
This confirms the overall SEAnet design choices over FDJNet, even after FDJNet was improved by using SoS preservation.
There were clear gaps on Deep1B in Figure~\ref{fig:exp-knn}e between convolutional models and PAA, and between PAA and TimeNet.
This is because Deep1B is from image processing, where adjacent values are not necessarily correlated.
This once again attests to SEAnet's versatility in handling datasets with different properties.

\subsection{DEA for Approximate Search}
\label{sec:exp-search}

\noindent{\bf [Approximate Answers Tightness]}
Figure~\ref{fig:exp-db-sizes} reports the results for the evaluation of the quality of the approximate similarity search answers. 
That is, we evaluate the quality of the 1st BSF answers when the search algorithm examines a single leaf node of the index.
We note that although this kind of evaluation is used in the literature~\cite{c08-kdd-Shieh-isax, c19-vldb-Echihabi-return}, it is not a fair comparison, as the visited leaf sizes may vary across different experiments, datasets, and indexes, depending on the distribution properties of the DEA values.
If the DEA distribution is very skewed (e.g., with many DEAs having the same or similar values), then they tend to gather into a small subset of the leaf nodes, resulting to a few large leaf nodes.
This has two consequences: 

\begin{figure*}[tb]
  \centering
  \subfloat{\includegraphics[width=\textwidth]{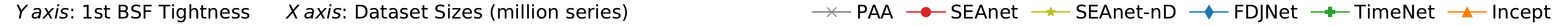}}\\[-2ex]
  \setcounter{subfigure}{0}
  \subfloat[RandWalk]{
    \includegraphics[width=.1385\textwidth]{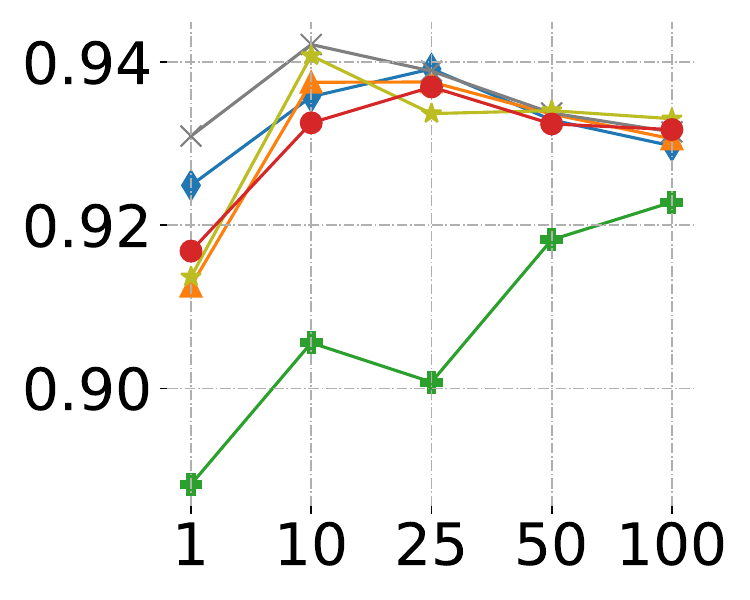}}
  \subfloat[F5]{
    \includegraphics[width=.1385\textwidth]{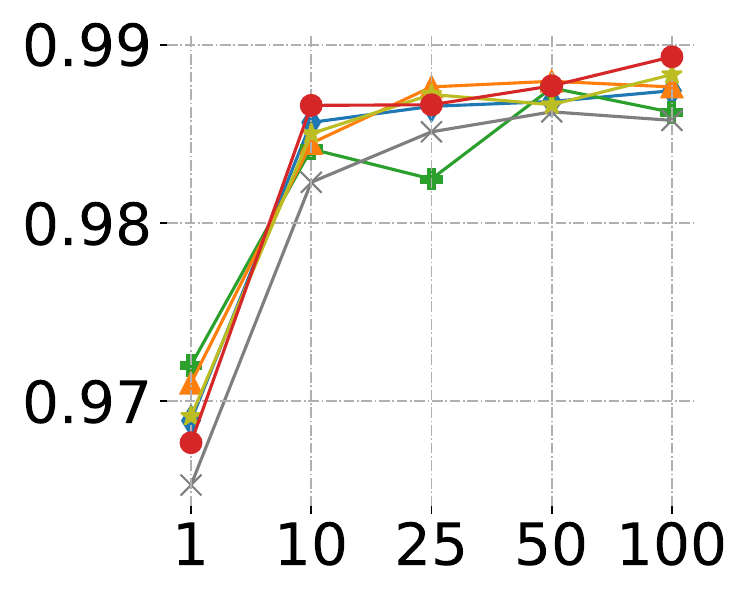}}
  \subfloat[F10]{
    \includegraphics[width=.1385\textwidth]{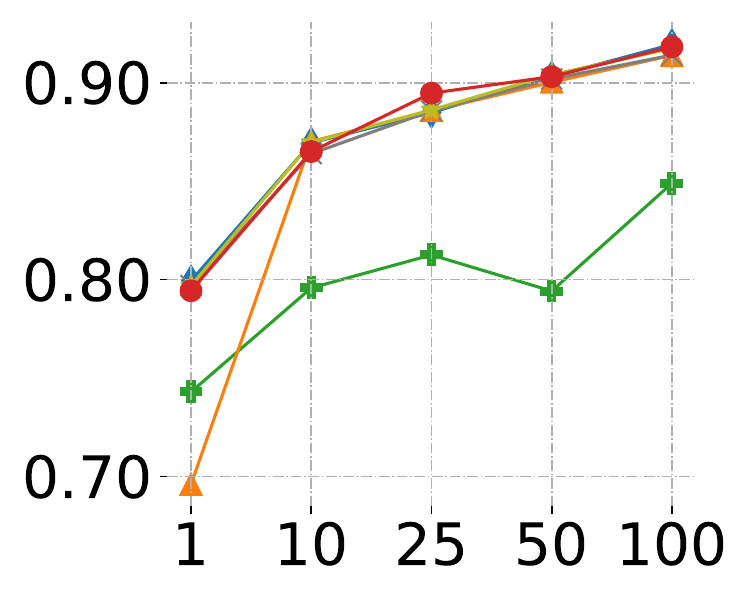}}
  \subfloat[SALD]{
    \includegraphics[width=.1385\textwidth]{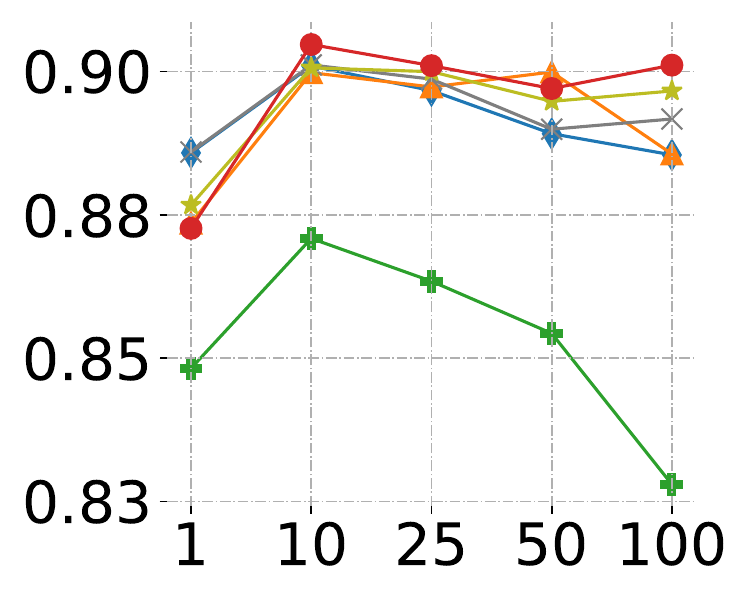}}
  \subfloat[Deep1B]{
    \includegraphics[width=.1385\textwidth]{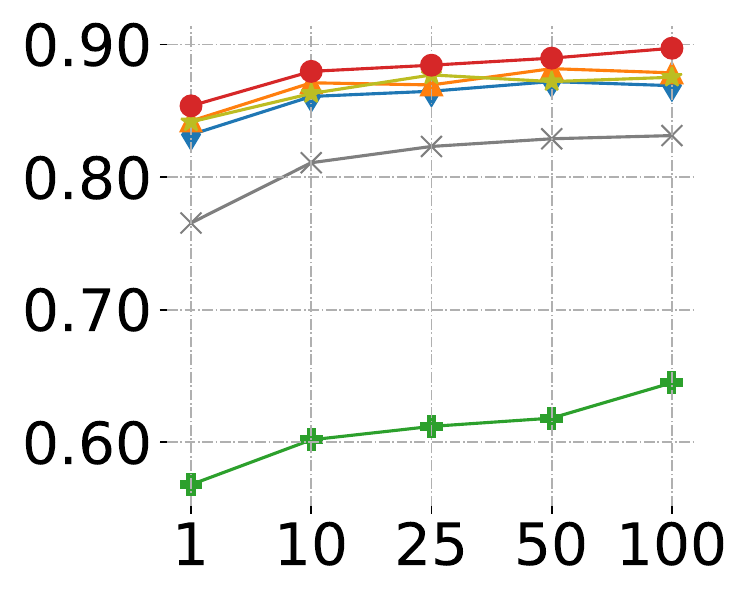}}
  \subfloat[Seismic]{
    \includegraphics[width=.1385\textwidth]{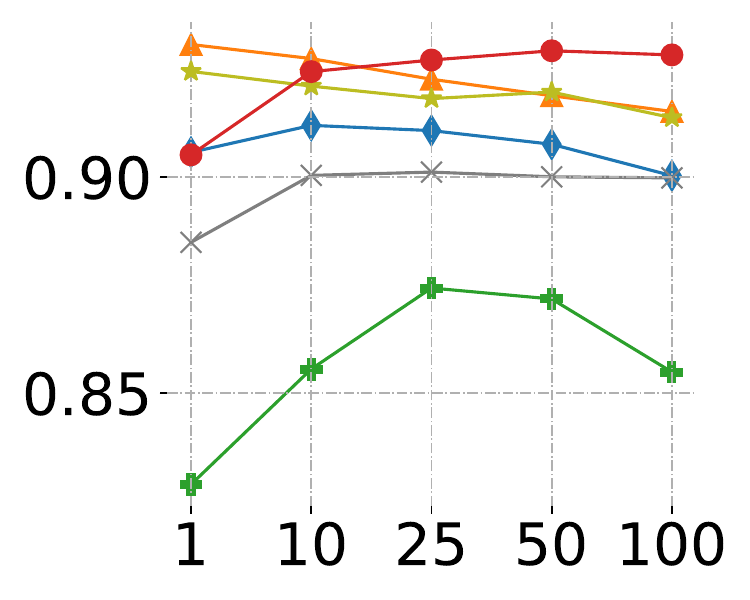}}
  \subfloat[Astro]{
    \includegraphics[width=.1385\textwidth]{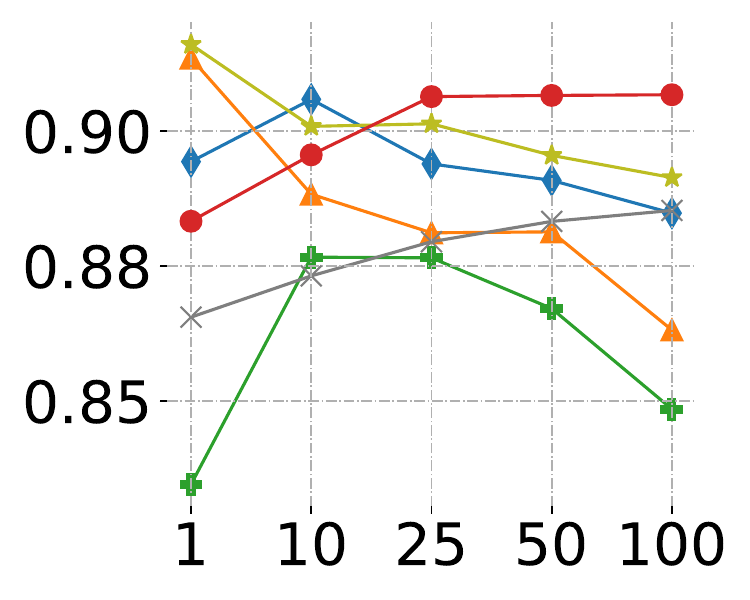}}
  \caption{Approximate query answers quality: 1st BSF tightness vs dataset size (higher is better)}
  \label{fig:exp-db-sizes}
\end{figure*}

\begin{enumerate}[noitemsep, topsep=0pt, wide, labelwidth=!, labelindent=0pt]
  \item Visiting a single large leaf involves examining a large number of series, which implies an increased probability for finding a tighter 1st BSF. 
  Nevertheless, this problem is alleviated as the dataset size increases (e.g., when we move from the 1M to the 100M RandWalk dataset, the additional number of series visited by PAA-based iSAX, when compared to SEAnet DEA-based iSAX, drops from 205\% to 7\%). %
  \item If the produced DEAs are very similar to one another for a large dataset, then the index cannot even be built.
  This happens when a leaf node has exhaustively used all iSAX bits and cannot be split any further in order to accommodate the incoming series.
  This situation generally occurred for DEAs originating from the non-scaled models, and for PAA on the 100M SALD dataset.
  SEAnet avoids this problem across all our experiments.
\end{enumerate}

Benefiting from leaf nodes of larger sizes, PAA or other models outperform SEAnet in some of the 1M and 10M dataset experiments, even though their summarization qualities are not better (refer to Table~\ref{tab:distances}a and Figure~\ref{fig:exp-knn}).
SEAnet and SEAnet-nD show their superiority in the large ($\geq$25M) and hard (real) datasets.
TimeNet generally lags behind the other solutions, indicating that a direct application of recurrent models is less competitive than convolutional models.
This could come from the fact that RNN models are designed to capture temporal correlations while being relatively less powerful at extracting local patterns~\cite{DBLP:journals/corr/abs-1803-01271}.

\noindent{\bf [Leaf Node Compactness]}
To better understand the quality of DEA-based iSAX, we examine the compactness, i.e., the average distances among series of the visited leaf nodes in Figure~\ref{fig:exp-compactness}.
Smaller average distance indicates the index is better in terms of mapping similar series into the same leaf nodes.
In RandWalk and F5, the advantages of SEAnet are not obvious as all methods perform well.
However, when considering the harder datasets, SEAnet achieves better results in 22 out of the 25 experiments. %
These results demonstrate that SEAnet produces more compact indexes than the competitors, especially on datasets considered hard for similarity search.

\begin{figure*}[tb]
  \centering
  \subfloat{\includegraphics[width=\textwidth]{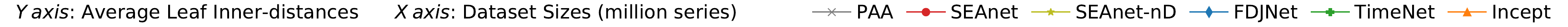}}\\[-2ex]
  \setcounter{subfigure}{0}
  \subfloat[RandWalk]{
    \includegraphics[width=.1385\textwidth]{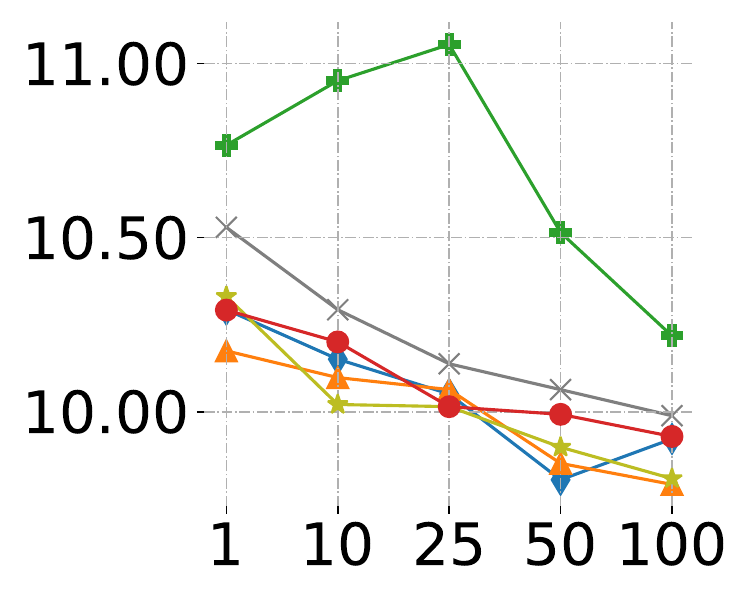}}
  \subfloat[F5]{
    \includegraphics[width=.1385\textwidth]{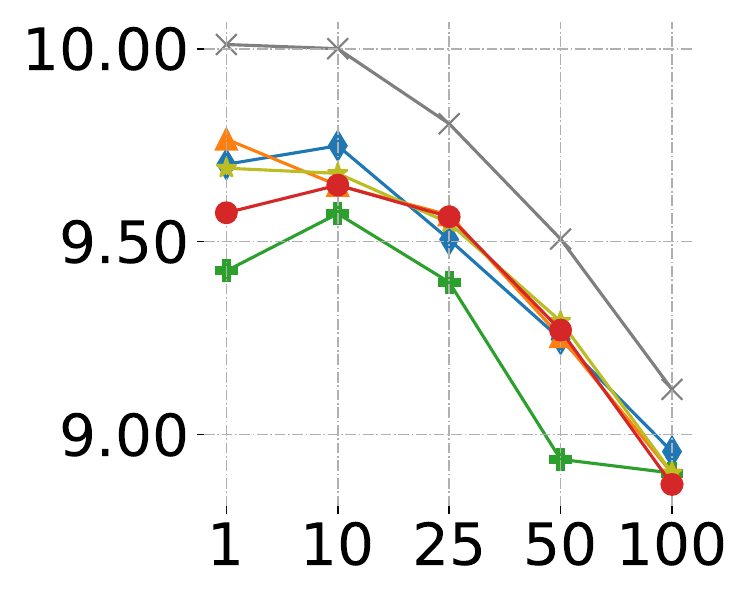}}
  \subfloat[F10]{
    \includegraphics[width=.1385\textwidth]{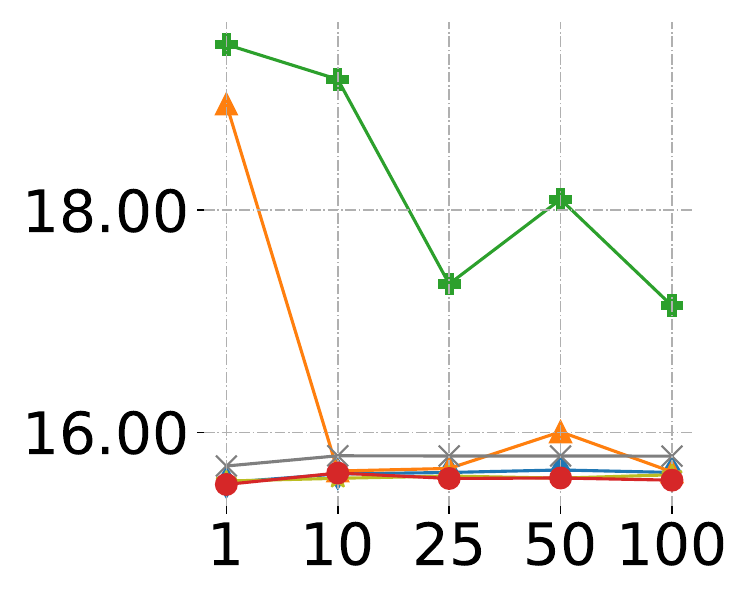}}
  \subfloat[SALD]{
    \includegraphics[width=.1385\textwidth]{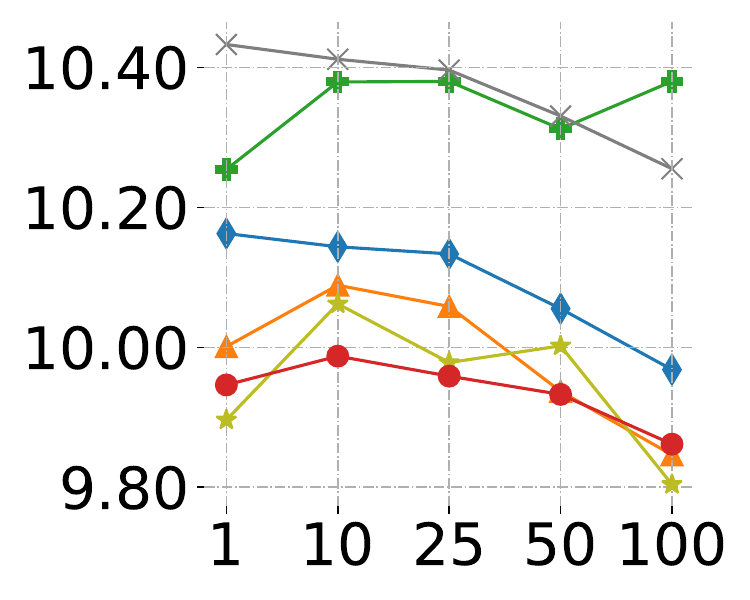}}
  \subfloat[Deep1B]{
    \includegraphics[width=.1385\textwidth]{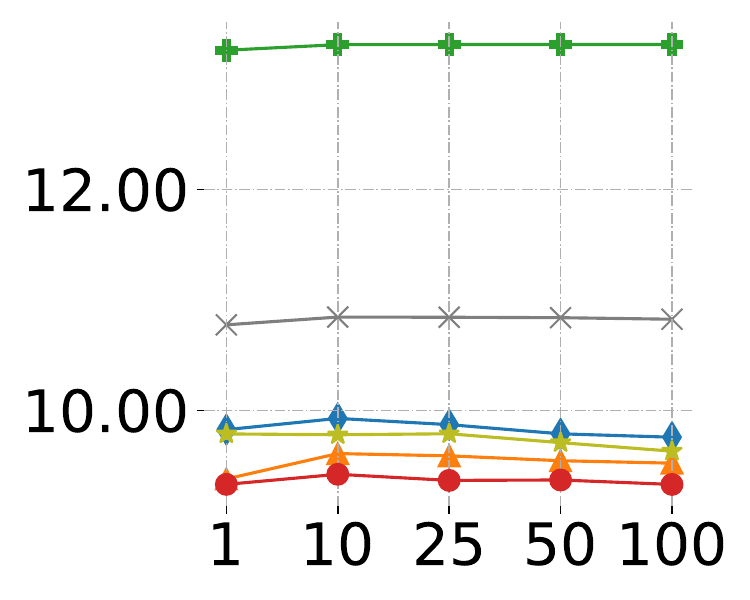}}
  \subfloat[Seismic]{
    \includegraphics[width=.1385\textwidth]{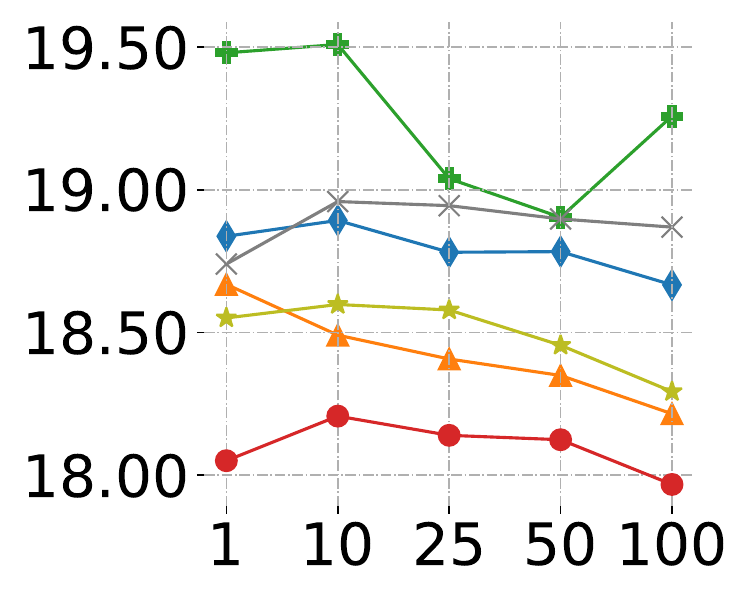}}
  \subfloat[Astro]{
    \includegraphics[width=.1385\textwidth]{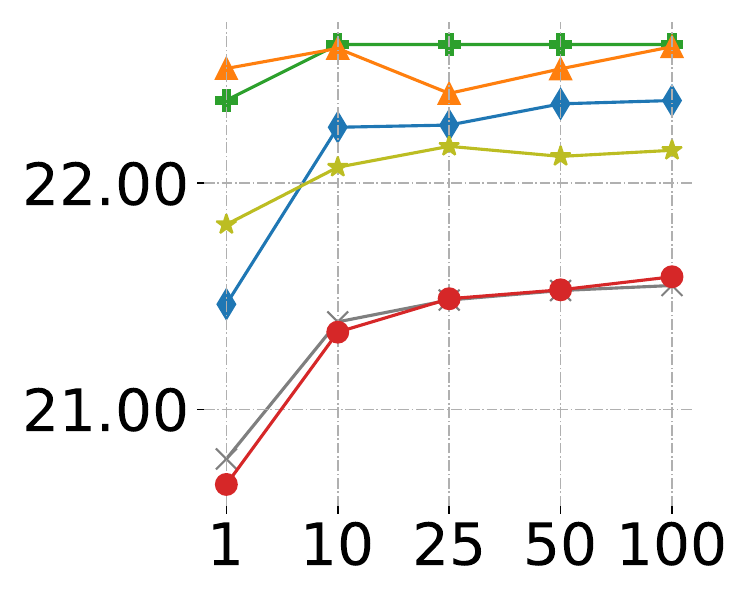}}
  \caption{Average distances among series of the visited leaf nodes vs dataset size (lower is better)}
  \label{fig:exp-compactness}
\end{figure*}

\begin{figure*}[tb]
  \centering
  \subfloat{\includegraphics[width=\textwidth]{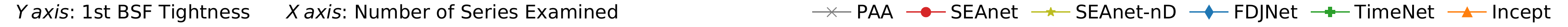}}\\[-2ex]
  \setcounter{subfigure}{0}
  \subfloat[RandWalk\label{fig:exp-max-series-rw}]{
    \includegraphics[width=.1385\textwidth]{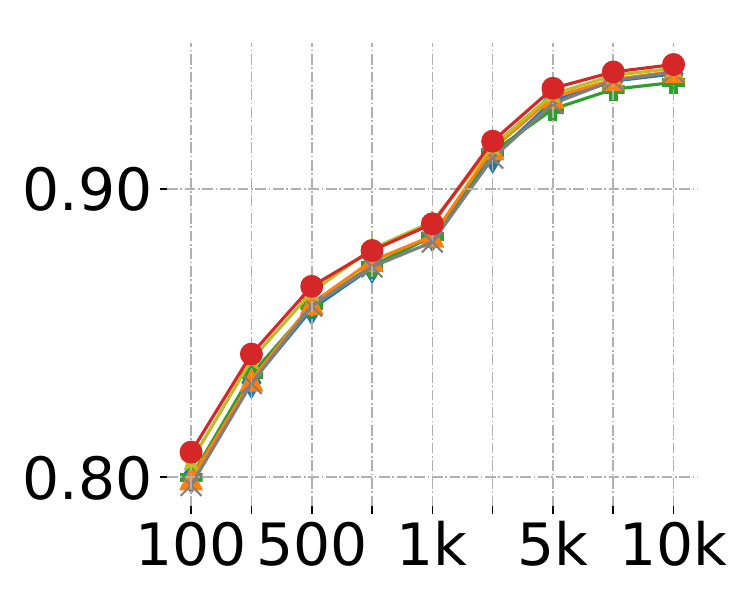}}
  \subfloat[F5\label{fig:exp-max-series-f5}]{
    \includegraphics[width=.1385\textwidth]{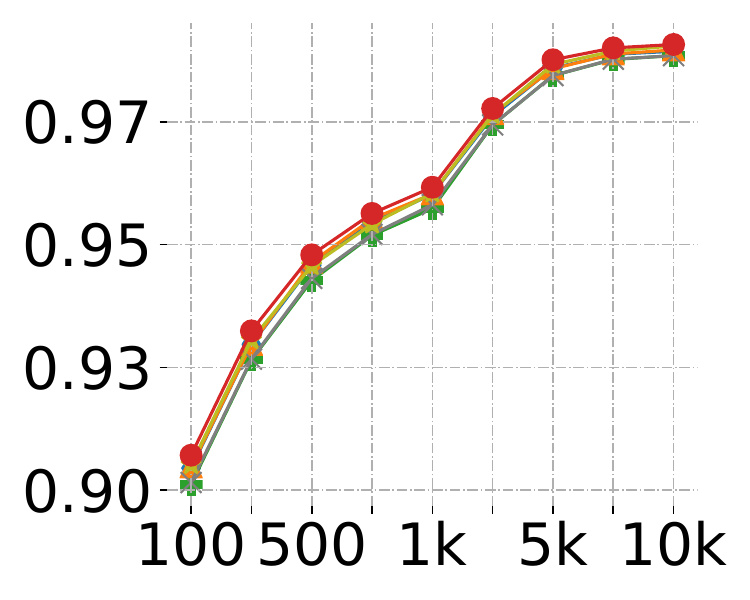}}
  \subfloat[F10\label{fig:exp-max-series-f10}]{
    \includegraphics[width=.1385\textwidth]{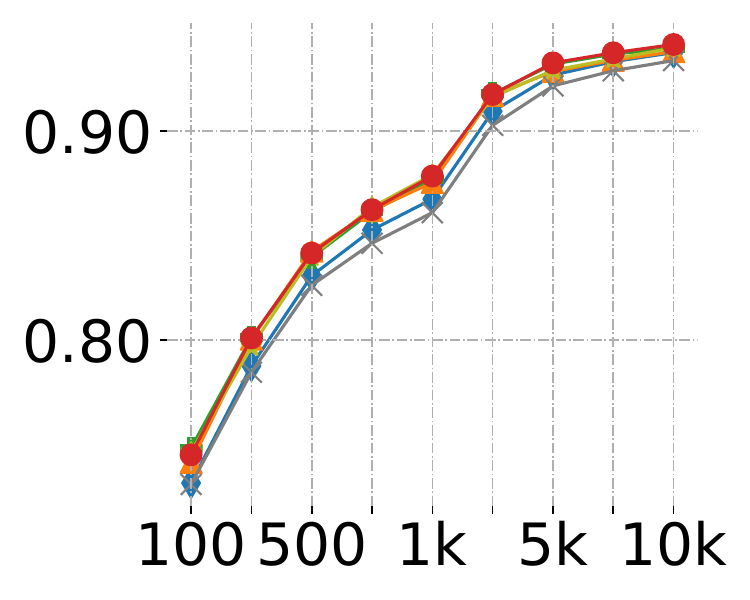}}
  \subfloat[SALD\label{fig:exp-max-series-sald}]{
    \includegraphics[width=.1385\textwidth]{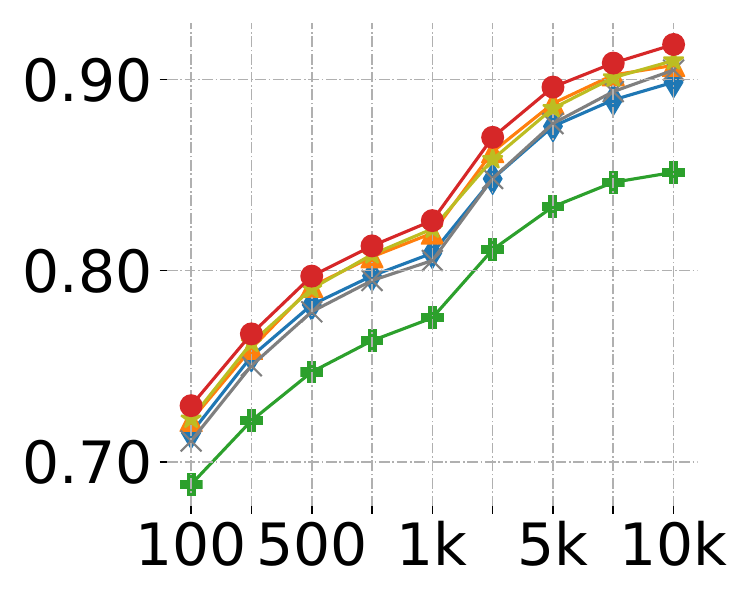}}
  \subfloat[Deep1B\label{fig:exp-max-series-d1b}]{
    \includegraphics[width=.1385\textwidth]{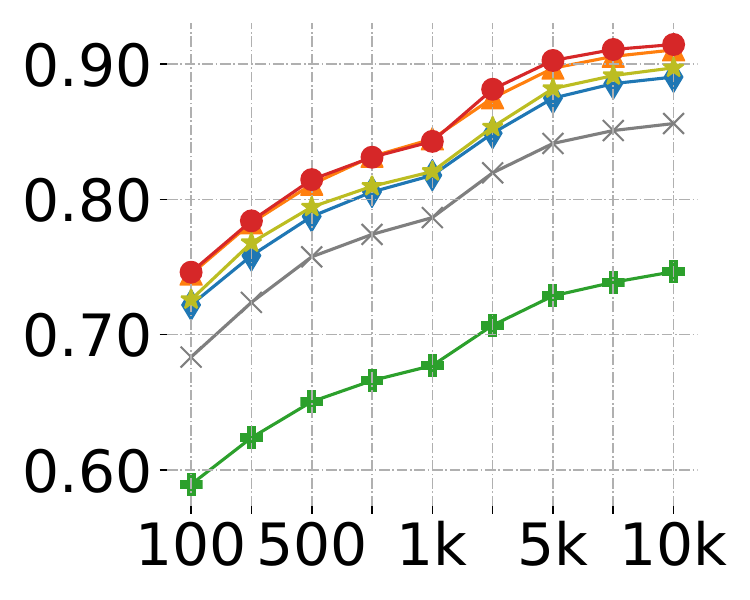}}
  \subfloat[Seismic\label{fig:exp-max-series-sm}]{
    \includegraphics[width=.1385\textwidth]{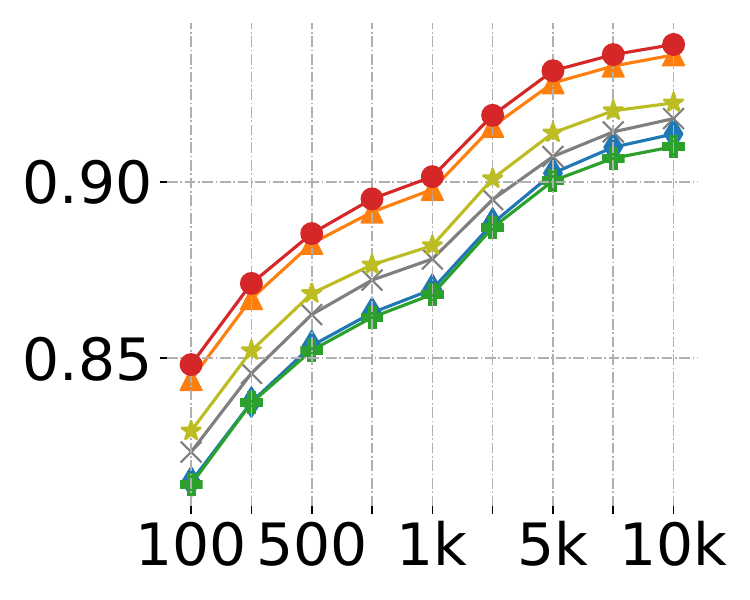}}
  \subfloat[Astro\label{fig:exp-max-series-astro}]{
    \includegraphics[width=.1385\textwidth]{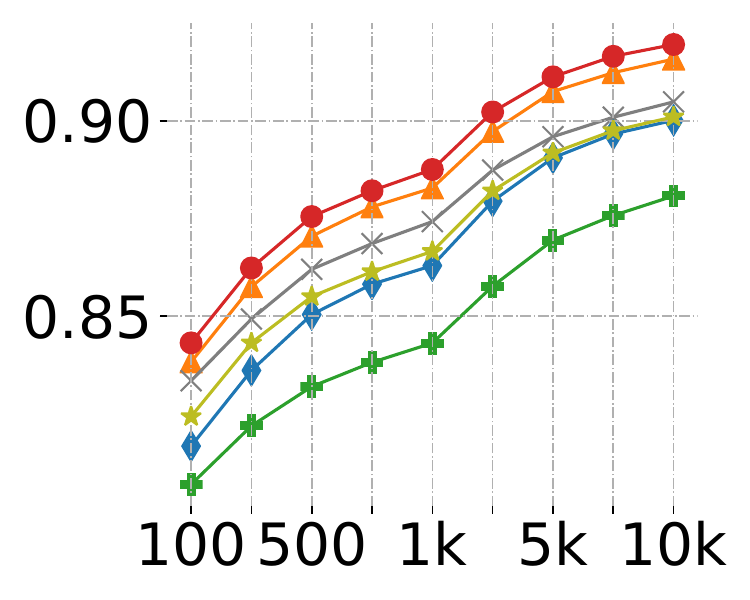}}
  \caption{Approximate query answers quality: 1st BSF tightness vs number of series visited; 100M series datasets (higher is better)}
  \label{fig:exp-max-series}
\end{figure*}

\noindent{\bf [1st BSF Tightness Limited by Number of Series to Examine]}\label{sec:exp-tightness}
We evaluate the benefit of using DEA for similarity search, by reporting the 1st BSF tightness as a function of the number of series that the similarity search algorithm examines. 
The results on 100M datasets, are shown in Figure~\ref{fig:exp-max-series}.
SEAnet improved the 1st BSF tightness, and thus the similarity search results, in 61 out of the 63 experiments. %
Its advantage was particularly obvious on the Deep1B, Seismic, and Astro (hard) datasets.

Besides the 1st BSF tightness, the index's leaf node compactness (i.e., the average distances among the series in a leaf node) also profiles the quality of the index built on DEA or PAA.
Smaller average distance indicates the index is more successful in grouping similar series into the same leaf node.
SEAnet leaded to an average improvement over PAA in leaf node compactness of 4\%, and up to 14\% %
for the challenging real dataset Deep1B, demonstrating its effectiveness in producing more compact indexes than the SOTA competitors.

To conclude this section, our comprehensive experiments verify the effectiveness of SEAnet's ability to provide better DEAs to facilitate data series similarity search.

\subsection{DEA for Downstream applications}
\label{sec:exp-ucr}

\noindent{\bf [SEAnet for Classification]}
We now evaluate the utility of SEAnet DEA in other (than similarity search) downstream applications.
Figure~\ref{fig:exp-ucr} reports the DEA/PAA-based $k$-NN classification accuracy on the UCR time series classification archive~\cite{DBLP:journals/ieeejas/DauBKYZGRK19} (we used the 75 datasets with $\geq$100 equal-length series and no missing values).
With $k$=1, 2, 3, SEAnet DEAs outperformed PAA on 65\%, 68\% and 69\% of the datasets respectively, demonstrating the superiority of SEAnet DEA over PAA for classification tasks, as well. 

\begin{figure}[tb]
  \centering
  \subfloat{\includegraphics[width=0.48\textwidth]{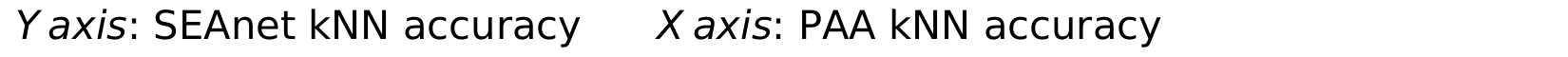}}\\[-2ex]
  \setcounter{subfigure}{0}
  \subfloat[k=1 (49-1-25)]{
    \includegraphics[width=.16\textwidth]{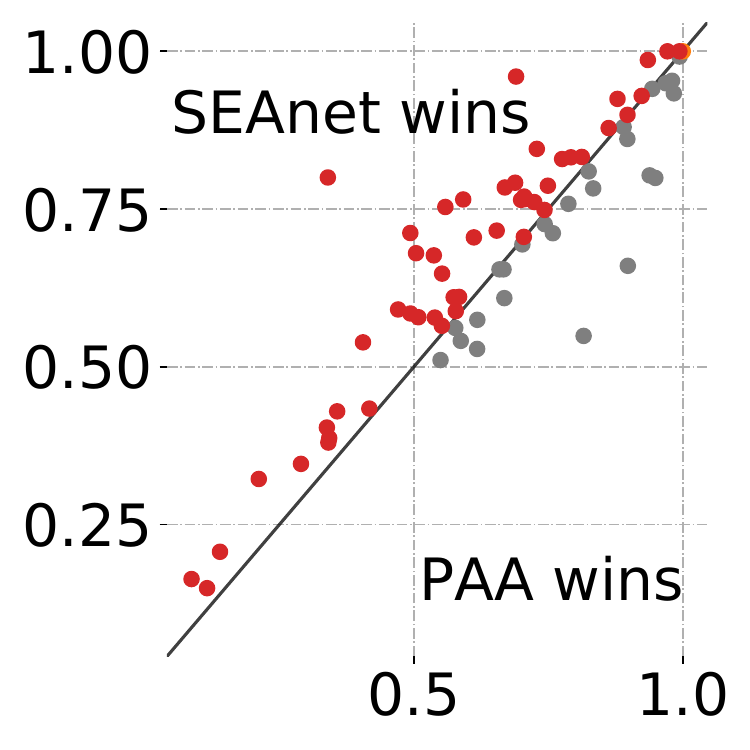}}
  \subfloat[k=3 (51-3-21)]{
    \includegraphics[width=.16\textwidth]{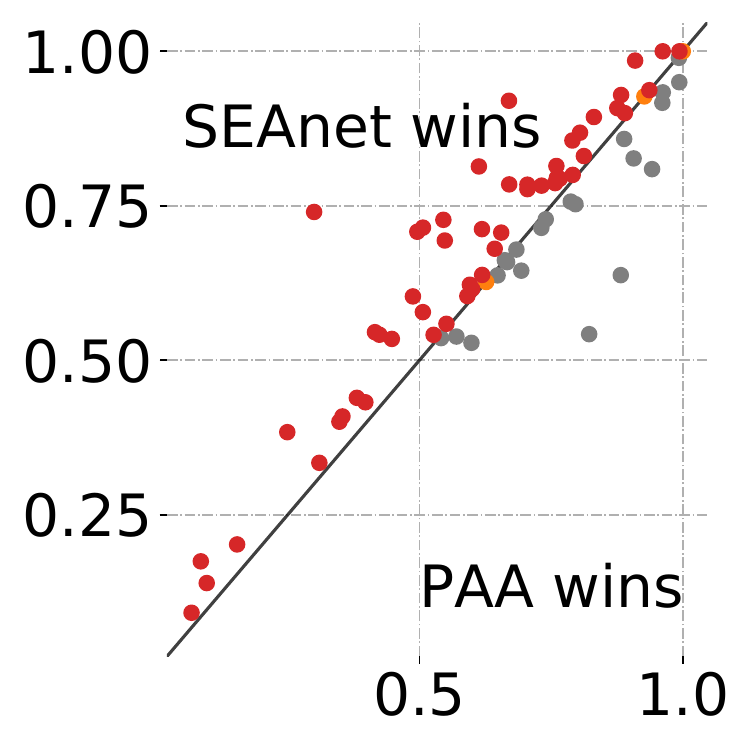}}
  \subfloat[k=5 (52-3-20)]{
    \includegraphics[width=.16\textwidth]{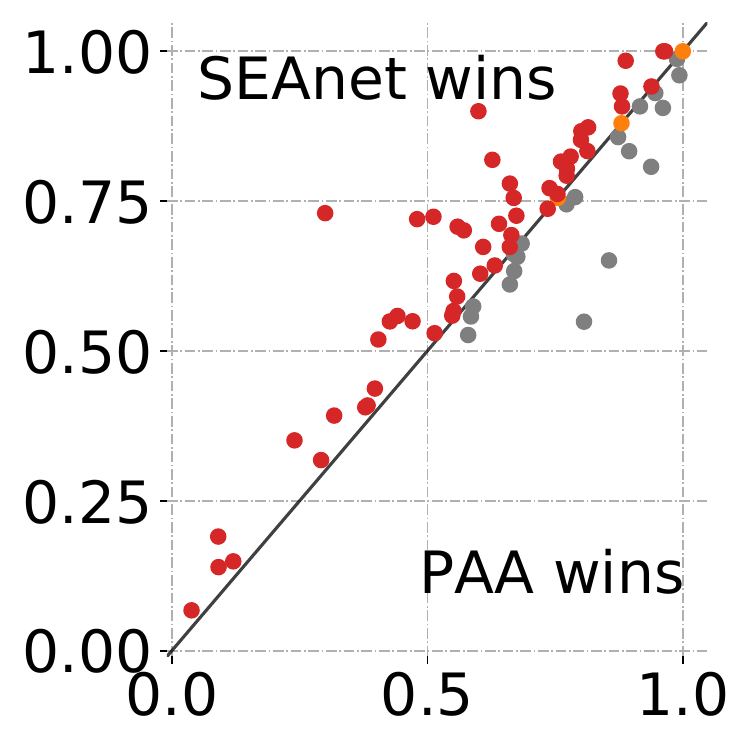}}
  \caption{SEAnet DEA vs. PAA on kNN classification accuracies of UCR18 dataset. Number of datasets where SEAnet wins-ties-losses compared with PAA are reported.}
  \label{fig:exp-ucr}
\end{figure}

\begin{figure}[tb]
  \centering
  \subfloat[Convergences of $L_C$\label{fig:exp-sub-converge}]{
    \includegraphics[width=0.95\linewidth]{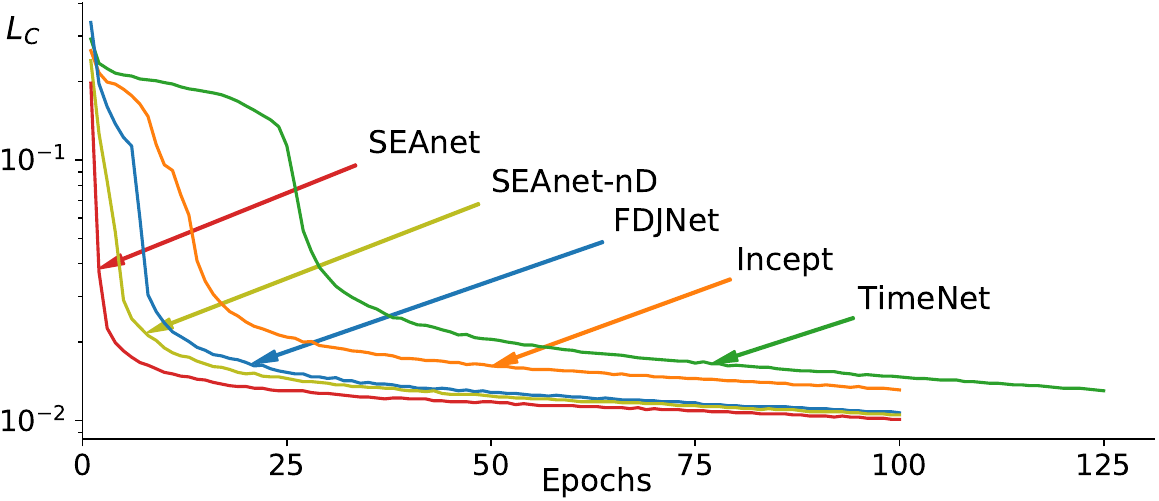}}
  \hfill
  \subfloat[Training speed\label{fig:exp-sub-time}]{
    \includegraphics[width=0.97\linewidth]{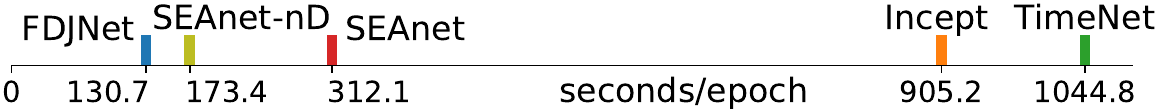}}
  \caption{Convergence procedures measured by compression error (averaged distance difference) $L_C$ across architectures on F5 dataset.}
  \label{fig:exp-convergence}
\end{figure}

\begin{figure}
  \centering
  \includegraphics[width=0.95\linewidth]{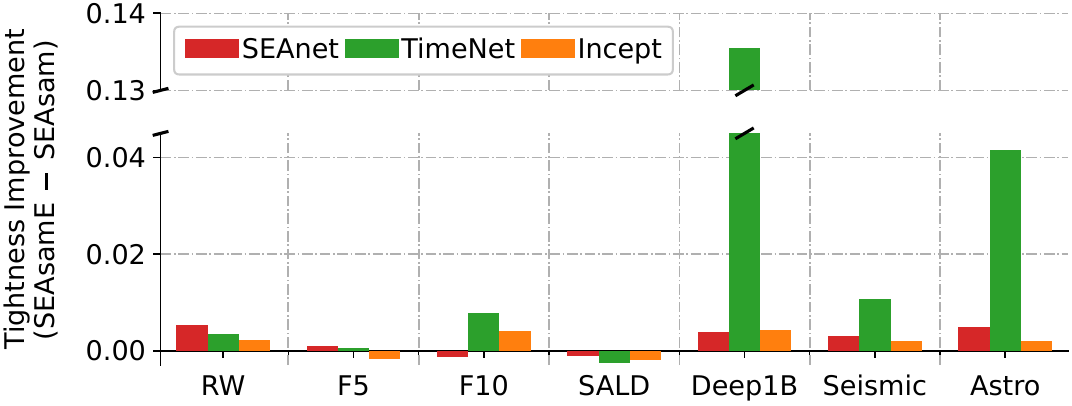}
  \caption{Improvements of SEAsamE over SEAsam in terms of 1st BSF rightness
  (positive values mean SEAsamE is better); 100M datasets.}
  \label{fig:exp-seasame}
\end{figure}

\begin{figure*}[tb]
  \centering
  \subfloat{\includegraphics[width=\textwidth]{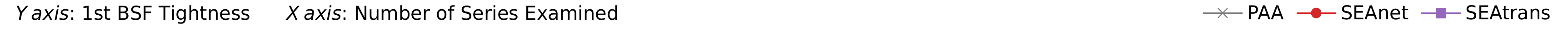}}\\[-2ex]
  \setcounter{subfigure}{0}
  \subfloat[RandWalk\label{fig:re-exp-max-series-rw}]{
    \includegraphics[width=.1385\textwidth]{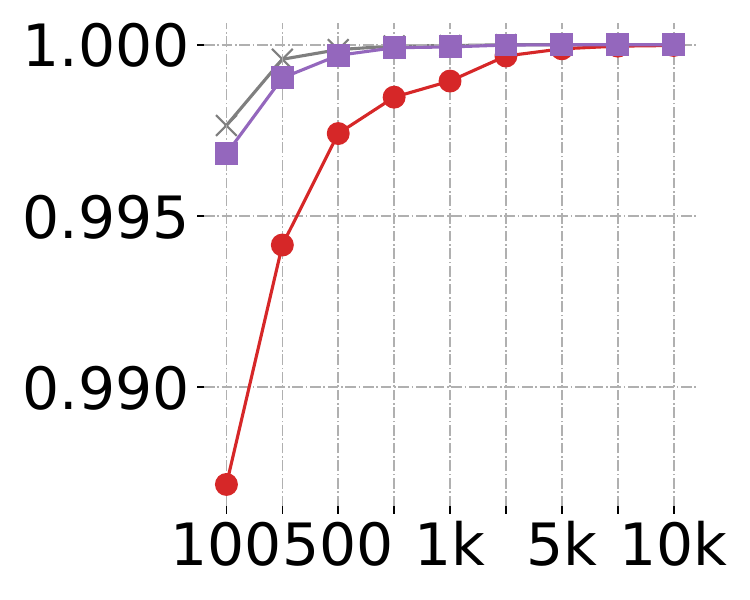}}
  \subfloat[F5\label{fig:re-exp-max-series-f5}]{
    \includegraphics[width=.1385\textwidth]{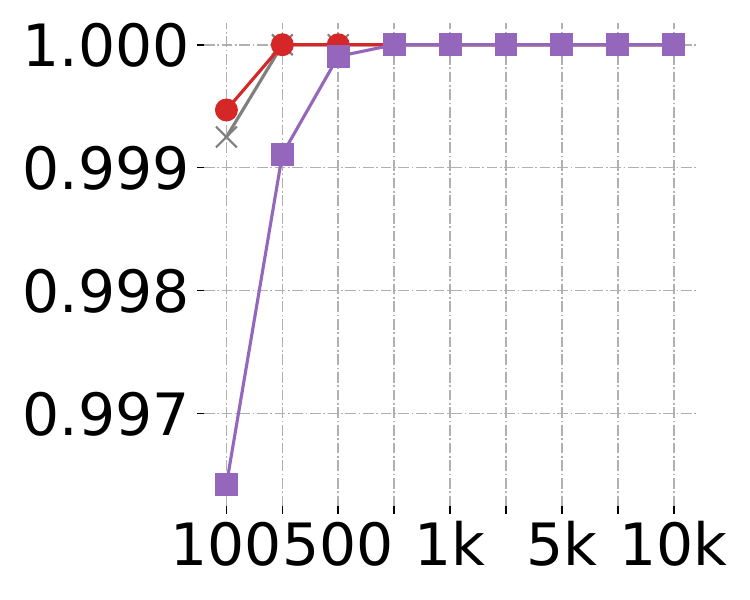}}
  \subfloat[F10\label{fig:re-exp-max-series-f10}]{
    \includegraphics[width=.1385\textwidth]{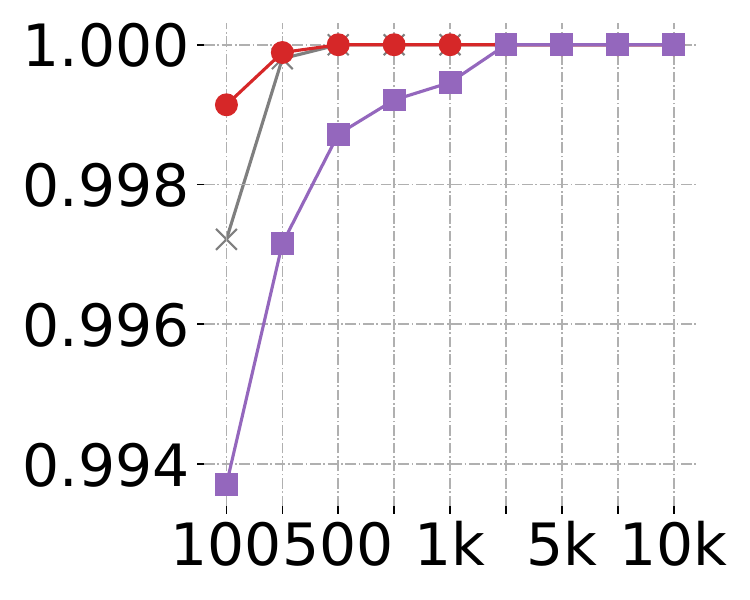}}
  \subfloat[SALD\label{fig:re-exp-max-series-sald}]{
    \includegraphics[width=.1385\textwidth]{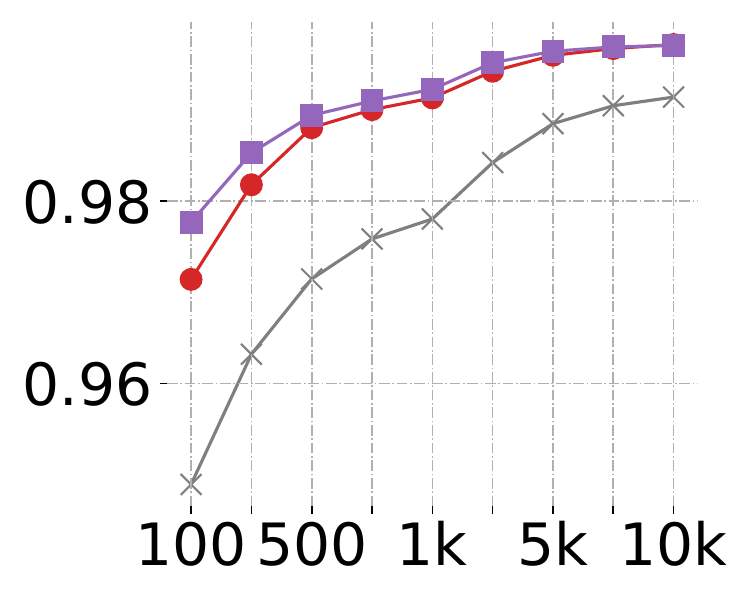}}
  \subfloat[Deep1B\label{fig:re-exp-max-series-d1b}]{
    \includegraphics[width=.1385\textwidth]{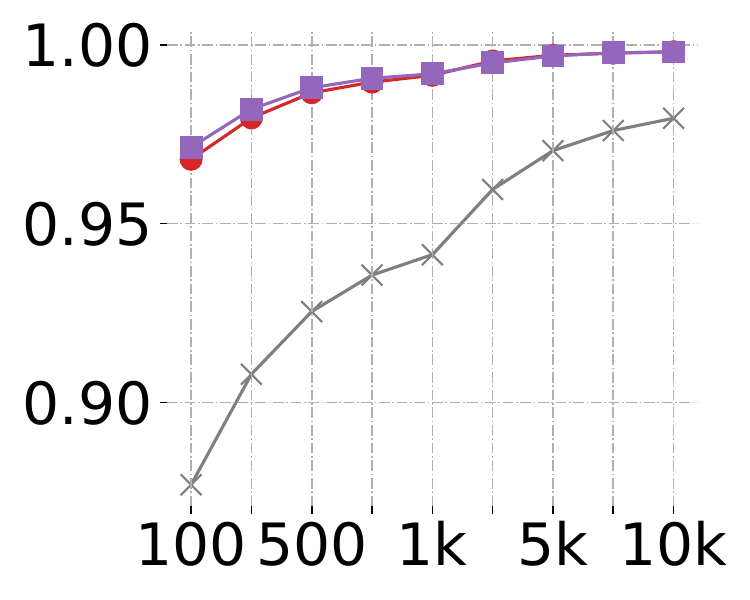}}
  \subfloat[Seismic\label{fig:re-exp-max-series-sm}]{
    \includegraphics[width=.1385\textwidth]{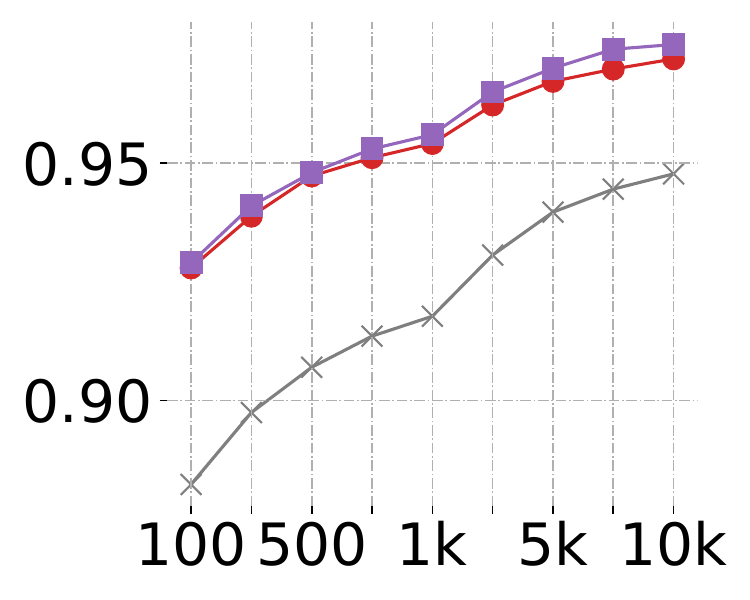}}
  \subfloat[Astro\label{fig:re-exp-max-series-astro}]{
    \includegraphics[width=.1385\textwidth]{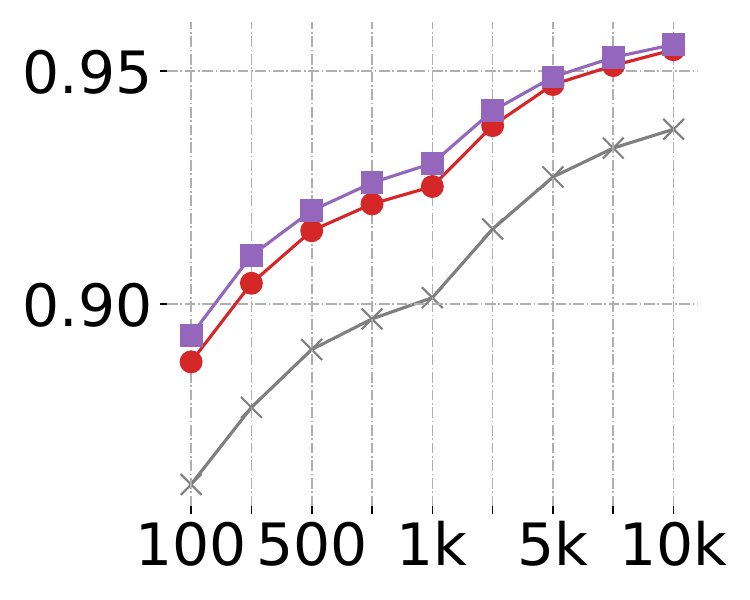}}
  \caption{
    Query answers quality upper-bounds: 1st BSF tightness vs number of series visited; 100M series datasets (higher is better).
    }
  \label{fig:exp-max-series-ideal}
\end{figure*}

\subsection{Time and Convergence}
\label{sec:exp-time}

\noindent{\bf [Training Time]}
We report training times and convergence for all architectures %
in Figure~\ref{fig:exp-convergence}. 
(We omit query answering times because they are proportional to the number of visited series and hence similar among different methods.)
Benefiting from the encoder-only architecture, FDJNet and SEAnet-nD the fastest models to converge in Figure~\ref{fig:exp-sub-time}.
For architectures with both encoder and decoder, SEAnet is 2.9$X$ and 3.35$X$ faster than InceptionTime and TimeNet under similar parameters.
SEAnet converges more steadily to the lowest training loss (Figure~\ref{fig:exp-sub-converge}).

\subsection{SEAsamE}
\label{sec:exp-seasame}

\noindent{\bf [SEAsamE for Similarity Search]}
We evaluate the effectiveness of training SEAnet with SEAsamE for similarity search, by reporting the 1st BSF tightness improvements when compared to SEAsam. 
The 1st BSFs are reported under the constraint that the query answering algorithm examines a maximum of 10,000 series in the index before producing the answer.
The results on 100M datasets, are shown in Figure~\ref{fig:exp-seasame}.
For 16 out of the 31 experiments, %
SEAsamE provided tighter 1st BSFs than SEAsam. %
Its advantage was especially obvious for TimeNet, demonstrating that RNN models can also be promising solutions when combined with specialized sampling and training strategies.

\subsection{SEAtrans}
\label{sec:exp-seatrans}

\noindent{[SEAtrans for Tightness Upper Bounds]}
To further analyze the improvement space of DEA-based data series similarity search, we reported the 1st BSF tightness upper bounds of deploying different summarizations in Figure~\ref{fig:exp-max-series-ideal}.
The 1st BSF tightness \emph{upper bounds} were extracted by examining %
series according to their distances to the query series in the summarization space.
These ideal results reveal the importance of considering the optimization opportunities deriving from both the index structure \emph{and} the summarizations.

  Figure~\ref{fig:exp-max-series-ideal} demonstrates that SEAnet significantly improves the 1st BSF tightness upper bounds achieved by PAA. 
  Moreover, SEAtrans further improves these bounds for four hard real-world datasets, where this is more needed.
  Compared with Figure~\ref{fig:exp-max-series}, these results lead to the following three main observations.
(1) For RandWalk, F5 and F10, all summarizations provided almost perfect upper bounds ($\approx$1).
Hence, the index structure should be better tuned to approach the capabilities of summarizations.
(2) For SALD and Deep1B, DEA learned by SEAtrans provided better upper bounds ($\approx$1) than PAA ($<$1).
In such cases, tuning the index structure is more promising for improving the 1st BSF tightness than tuning the DEA qualities.
(3) For Seismic and Astro, DEA learned by SEAtrans provided better upper bounds ($\approx$96-98\%) than the competitors, but with room for improvement.
This implies that improving the DEA quality on these datasets is crucial in order to further improve similarity search.
These observations are important guidelines for future studies in this direction.

\section{Discussion and Conclusions}\label{sec:conclusions}

In this paper, we introduce the use of deep learning embeddings, DEA, for data series similarity search.
We propose %
novel autoencoders, SEAnet and SEAtrans, designed under the %
SoS preservation principle, for effectively learning DEA.
New sampling strategies, SEAsam and SEAsamE, are introduced in order to facilitate %
training on massive collections.
We demonstrate that the DEA learned by SEAnet and SEAtrans more closely approximates the original data series distances, better preserves the true nearest neighbors in the summarized space, better reconstructs the original series, and leads to better similarity search results than the SOTA PAA-based iSAX. %
These preliminary results are very promising, they set the ground for further advancements in this area, and have the potential to also improve the performance of kNN classification, anomaly detection and other similarity search based applications.

In our future work, we will study the development of lower bounding properties for DEA that will enable exact similarity search, the adaptation of transfer learning or incremental learning techniques for quickly fitting new or dynamic datasets, the development of more powerful sampling strategies, and the careful study of query answering strategies on top of DEA, including \emph{product quantization}~\cite{DBLP:journals/pami/JegouDS11}, \emph{locality sensitive hashing}~\cite{DBLP:conf/icde/LiZSWT020}, and modern %
series indexes~\cite{j20-vldbj-Linardi-ulisse}.

\ifCLASSOPTIONcompsoc
  \section*{Acknowledgments}
\else
  \section*{Acknowledgment}
\fi

Work supported by 
$Y \Pi AI \Theta A$ \& NextGenerationEU project HARSH ($Y\Pi 3TA-0560901$) that is carried out within the framework of the National Recovery and Resilience Plan “Greece 2.0” with funding from the European Union – NextGenerationEU,
IdEx Université Paris Cité %
ANR-18-IDEX-0001, China Scholarship Council, FMJH Program PGMO, %
HIPEAC 4, GENCI–IDRIS (Grants 2020-101471, 2021-101925, 2022-AD011012641R), and NVIDIA Corporation for the Titan Xp GPU donation used in this research.

\ifCLASSOPTIONcaptionsoff
  \newpage
\fi

\bibliographystyle{IEEEtran}
\bibliography{seanet}

\begin{IEEEbiography}[{\includegraphics[width=1in,height=1.25in,clip,keepaspectratio]{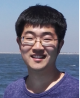}}]{Qitong Wang} 
  is a PhD student in computer science at Université Paris Cité, France, under the supervision of Prof. Themis Palpanas.
  He got his bachelor's and master's degrees in computer science from Fudan University, China.
  His research interests lie in the intersections between massive data series analysis systems and modern machine learning tools.
  He has published several papers in international venues including KDD, VLDB, SIGMOD, DASFAA, etc.
\end{IEEEbiography}

\begin{IEEEbiography}[{\includegraphics[width=1in,height=1.25in,clip,keepaspectratio]{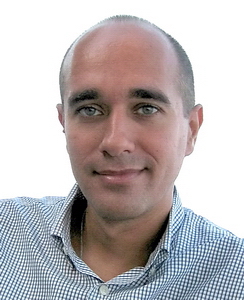}}]{Themis Palpanas}
is Director of the Data Intelligence Institute of Paris (diiP), Senior Member of the French University Institute (IUF), and Professor of computer science at Université Paris Cité (France). %
He is the author of 9 US patents, %
the recipient of 3 Best Paper awards, and the IBM SUR Award. 
He has served as Editor in Chief for BDR Journal, General Chair for VLDB 2013, and Associate Editor for TKDE and DSE journals, as well as PVLDB 2022, 2019 and 2017. %
\end{IEEEbiography}

\end{document}